\newtheorem{lemma}{Lemma}
\newtheorem{remark}{Remark}
\newtheorem{theorem}{Theorem}
\newtheorem{definition}{Definition}
\newtheorem{problem}{Problem}
\tikzset{
    >=stealth',
    punkt/.style={
           rectangle,
           rounded corners,
           draw=black, very thick,
           text width=12em,
           minimum height=2em,
           text centered},
    pil/.style={
           ->,
           thick,
           shorten <=2pt,
           shorten >=2pt,}
}
\preto\chapter{\glsresetall}
\newacronym{cegis}{CEGIS}{counterexample-guided inductive synthesis}
\newacronym{csp}{CSP}{constraint satisfiability problem}
\newacronym{smt}{SMT}{satisfiability modulo theories}
\newacronym{lp}{LP}{linear programming}
\newacronym{milp}{MILP}{mixed-integer linear programming}
\newacronym{ips}{IPS}{intelligent physical system}
\newacronym{ltl}{LTL}{linear temporal logic}
\newacronym{rtl}{RTL}{temporal logic over reals}
\newacronym{prtl}{PRTL}{probabilistic temporal logic over reals}
\newacronym{stl}{STL}{signal temporal logic}
\newacronym{mpc}{MPC}{model predictive control}
\newacronym{itmp}{ITMP}{integrated task and motion planning}
\newacronym{ai}{AI}{artificial intelligence}
\newacronym{ff}{FF}{fast forward}
\newacronym{idtmp}{IDTMP}{iteratively deepened task and motion planning}
\newacronym{cosmop}{CoSMoP}{composition of safe motion primitives}
\newacronym{mld}{MLD}{mixed logical dynamic}
\newacronym{pomdp}{POMDP}{partially observable Markov decision process}
\newacronym{prstl}{PrSTL}{probabilistic signal temporal logic}
\newacronym{socp}{SOCP}{second-order cone programming}
\newacronym{rhc}{RHC}{receding horizon control}
\newacronym{kf}{KF}{Kalman filter}
\newacronym{ukf}{UKF}{unscented Kalman filter}
\newacronym{ekf}{EKF}{extended Kalman filter}
\newacronym{smc}{SMC}{sequencial Monte-Carlo}
\newacronym{lqr}{LQR}{linear quadratic regulator}
\newacronym{zoh}{ZOH}{zero order hold}
\newacronym{ir}{IR}{infrared}
\newacronym{cps}{CPS}{cyber-physical systems}
\newacronym{dof}{DOF}{degrees of freedom}
\newacronym{rrt}{RRT}{Rapidly-exploring Random Tree}
\newacronym{ltlopt}{LTLOpt}{optimal control with linear temporal logic specifications}
\newacronym{ros}{ROS}{Robot Operating System}
\newacronym{bsc}{BSC}{Bounded Satisfiability Checking}
\newacronym{bmc}{BMC}{Bounded Model Checking}
\newacronym{ompl}{OMPL}{Open Motion Planning Library}
\DeclareMathAlphabet\mathbfcal{OMS}{cmsy}{b}{n}
\newacronym{idstl}{idSTL}{iterative deepening Signal Temporal Logic}
\newacronym{idprstl}{idPrSTL}{iterative deepening probabilistic Signal Temporal Logic}
\newacronym{idprtl}{idPRTL}{iterative deepening probabilistic temporal logic over reals}
\newacronym{hstl}{HSTL}{Hybrid Signal Temporal Logic}
\newacronym{hprstl}{HPrSTL}{Hybrid Probabilistic Signal Temporal Logic}
\newacronym{mlo}{MLO}{maximum likelihood observation}
\newacronym{lqg}{LQG}{Linear Quadratic Gaussian regulator}
\newcommand{\pushright}[1]{\ifmeasuring@#1\else\omit\hfill$\displaystyle#1$\fi\ignorespaces}
\newcommand{\pushleft}[1]{\ifmeasuring@#1\else\omit$\displaystyle#1$\hfill\fi\ignorespaces}
\DeclareMathOperator*{\argmin}{arg\,min}
\newcommand{\eye}[4]
{   \draw[rotate around={#4:(#2,#3)}] (#2,#3) -- ++(-.5*55:#1) (#2,#3) -- ++(.5*55:#1);
    \draw (#2,#3) ++(#4+55:.75*#1) arc (#4+55:#4-55:.75*#1);
    \draw[fill=gray] (#2,#3) ++(#4+55/3:.75*#1) arc (#4+180-55:#4+180+55:.28*#1);
    \draw[fill=black] (#2,#3) ++(#4+55/3:.75*#1) arc (#4+55/3:#4-55/3:.75*#1);
}
\definecolor{shadecolor}{RGB}{150,150,150}
\newcommand{\hl}{{}}
\newcommand{\idstlpred}[1][]{\ensuremath{\gls*{pred}^{\ifthenelse{\isempty{#1}}{\gls*{stlfunc}}{#1}}}}
\newcommand{\idstlpredint}[1][]{\ensuremath{\gls*{pred}^{\ifthenelse{\isempty{#1}}{\gls*{inteq}}{#1}}}}
\newcommand{\idstlpredpr}[1][]{\ensuremath{\gls*{pred}\ifthenelse{\isempty{#1}}{_{\gls*{tolpr}}^{\gls*{stlfunc}}}{#1}}}
\let\NAT@parse\undefined
\title{\LARGE \bf Active Perception and Control from Temporal Logic Specifications}
\author{
Rafael~Rodrigues~da~Silva, Vince~Kurtz, and~Hai~Lin
\thanks{The partial support of the National Science Foundation (Grant No.
ECCS-1253488, IIS-1724070, CNS-1830335) and of the Army Research
Laboratory (Grant No. W911NF- 17-1-0072) is gratefully acknowledged.}
\thanks{The authors are with the Department of Electrical Engineering, University of Notre Dame, Notre Dame, IN, 46556 USA. \url{{rrodri17,vkurtz,hlin1}@nd.edu}.}
}
\begin{document}

\maketitle
\thispagestyle{empty}
\pagestyle{empty}

\begin{abstract}
Next-generation autonomous systems must execute complex tasks in uncertain environments. Active perception, where an autonomous agent selects actions to increase knowledge about the environment, has gained traction in recent years for motion planning under uncertainty. One prominent approach is planning in the belief space. However, most belief-space planning starts with a known reward function, which can be difficult to specify for complex tasks. On the other hand, symbolic control methods automatically synthesize controllers to achieve logical specifications, but often do not deal well with uncertainty. In this work, we propose a framework for scalable task and motion planning in uncertain environments that combines the best of belief-space planning and symbolic control. Specifically, we provide a counterexample-guided-inductive-synthesis algorithm for probabilistic temporal logic over reals (PRTL) specifications in the belief space. Our method automatically generates actions that improve confidence in a belief when necessary, thus using active perception to satisfy PRTL specifications. 
\end{abstract}


%
\IEEEpeerreviewmaketitle

\section{Introduction}

Taskable and adaptive Intelligent Physical Systems (IPS) must not only take actions to satisfy high-level task specifications, but also change their behavior over time, learning from experience. These systems must formulate beliefs about the environment and explicitly act to improve confidence in these beliefs: this process is known as \textit{active perception} \cite{bajcsy1988active}. 

Active perception can be formalized as a partially observable Markov decision process (POMDP) \cite{krishnamurthy2016partially} since physical systems often follow the Markov assumption. POMDP planning involves a search for a policy that satisfies certain requirements. Since states are not directly observable, the policy is a mapping from history, a sequence of observations and actions, to new actions. History can be compactly represented as a probability distribution called the \textit{belief}.

Various methods have been proposed for active perception via belief space planning, most of which search for a policy that maximizes the expected value of a reward function \cite{bai2014integrated,valencia2013planning,agha2014firm,van2012efficient,indelman2016towards,platt2017efficient}.  
Specifying a reward function that guarantees completion of a complex task can be difficult, however. Using temporal logic specifications is often clearer and more intuitive. Furthermore, \textit{symbolic control} approaches, which synthesize controllers to achieve logical specifications, can provide strong formal guarantees. \hl{Early efforts in symbolic control focused on discrete transition system models. Probabilistic extensions to these models focus on uncertainty arising from state transitions, and mature software tools have been developed to this end \cite{ciesinski2004probabilistic,kwiatkowska2011prism}. In this work, we consider symbolic control of uncertain continuous systems. Specifically, we propose a provably correct framework for \textit{symbolic control in the belief space} based on Probabilistic Temporal Logic over Reals (PRTL) specifications. We focus on PRTL in particular because it is defined over real-valued signals and for its simple, clean notation.}

Existing synthesis methods for uncertain continuous systems are primarily based on mixed integer programming \cite{sadigh2015safe,zhong2017fast,jha2018safe}. These methods provide satisfying controllers for a convex fragment of \gls*{prstl}, another temporal logic for real-valued state variables. \hl{Other methods propose relaxations such as sampling-based optimization \cite{dey2016fast} and shrinking horizon model predictive control \cite{farahani2018shrinking} to achieve (possibly non-convex) specifications. However, these approaches focus on robustness to uncertainty rather than active perception. This means that such algorithms do not plan to gather more information, though doing so may be necessary to achieve a specification over the belief.}

We propose a PRTL controller synthesis algorithm for systems with perception and actuation uncertainty. \hl{By satisfying specifications defined over the belief space, our method incorporates active perception.} This means that the system not only satisfies expressive temporal specifications, but also synthesizes actions to gain information when necessary. \hl{We prove that our approach is sound and probabilistically complete, and demonstrate its effictiveness with a simulated example}.
 
The rest of this letter is organized as follows. First, we present the problem statement in Section \ref{sec:pr_preliminaries}. Next, we present the details of our controller synthesis algorithm in Section \ref{sec:hpomdp}. We show that our approach is sound and probabilistically complete in Section \ref{sec:correctness}. Section \ref{sec:mot_ex} provides an example of applying our methods to automated infrastructure inspection with a quadrotor, and Section \ref{sec:conclusion} concludes the paper.  

\section{Preliminaries}\label{sec:pr_preliminaries}

%

\subsection{System}\label{sec:prsystem}  

Consider the discrete-time linear control system 
\begin{equation}\label{eq:prsystem}
\begin{aligned}
	\gls*{statec}_{k+1} = & A\gls*{statec}_k + B\gls*{inputc}_k + \sqrt{\gls*{disturbance-std}} \gls*{stdnormal}  \\
	\gls*{outputc}_k = & C \gls*{statec}_k + \sqrt{\gls*{noise-std}_k} \gls*{stdnormal}, & \gls*{stdnormal} \sim \gls*{normal}(0,\gls*{eye}_{n}),
\end{aligned}  
\end{equation}
where $\gls*{statec} \in \mathbb{R}^{\gls*{statecnb}}$ are the state variables, $\gls*{inputc}  \in \gls*{inputcdom} \subset \mathbb{R}^{\gls*{inputcnb}}$ are the control inputs, $\gls*{outputc}  \in \mathbb{R}^{\gls*{outputcnb}}$ are the output variables, and $A \in \mathbb{R}^{\gls*{statecnb} \times \gls*{statecnb}}$, $B \in \mathbb{R}^{\gls*{statecnb}\times \gls*{inputcnb}}$, $C \in \mathbb{R}^{\gls*{outputcnb}\times \gls*{statecnb}}$ are constant matrices. We assume that $\gls*{inputcdom} = \{H_u \gls*{inputc} \geq \boldsymbol{c}_u \}$ is a full-dimensional polytope and that $(A,B)$ is stabilisable. 
This system is subject to uncorrelated Gaussian disturbances and noise with covariances $\gls*{disturbance-std}^\intercal \geq 0$ and $\gls*{noise-std}_k^\intercal \geq 0$, where $\gls*{noise-std}_k$ can be state dependent.  


\subsection{Belief State}\label{sec:belief}

Since only noisy observations $\gls*{outputc}_k$ are available in System (\ref{eq:prsystem}), we must estimate the state $\gls*{statec}_k$. To do so, the controller keeps track of a \textit{history} of observations and actions. History can be compactly represented as a random process $\gls*{belief}_k = P(\gls*{statec}_k)$ known as the \textit{belief state} \cite{krishnamurthy2016partially}. The belief state can be tracked using Bayesian filtering: 
\begin{equation*}
P(\gls*{statec}_{k+1}) = \eta P(\gls*{outputc}_{k+1}|\gls*{statec}_{k+1},\gls*{inputc}_k)\int_{\gls*{statec}} P(\gls*{statec}_{k+1}|\gls*{statec},\gls*{inputc}_k)\gls*{belief}_k, 
\end{equation*}
where $\eta$ is a normalization constant \cite{krishnamurthy2016partially}. 

In our case, we assume that the belief state is Gaussian with mean $\gls*{meanvec} _k \in \mathbb{R}^{\gls*{statecnb}}$ and covariance $\gls*{covmat}_k \in \mathbb{R}^{\gls*{statecnb} \times \gls*{statecnb}}$. Under this assumption, the belief dynamics can be derived by applying the Kalman Filter \cite{chui2017kalman}:
\begin{equation}\label{eq:beliefsys}
\gls*{meanvec}_{k+1} = 
\boldsymbol{K}_k\big(\gls*{outputc}_k - C(\gls*{f}_k)\big) + \gls*{f}_k, \quad \gls*{covmat}_{k+1} = \gls*{inovar}_k - \boldsymbol{K}_k C \gls*{inovar}_k, 
\end{equation}
where $\gls*{f}_k = A \gls*{meanvec} _k + B \gls*{inputc}_k$, $\boldsymbol{K}_k = \gls*{inovar}_k C^\intercal (C \gls*{inovar}_k C^\intercal + \gls*{noise-std}_k)^{-1}$ is the Kalman gain, and $\gls*{inovar}_k = A \gls*{covmat}_k A^\intercal  + \gls*{disturbance-std}$. 

Note that the belief update is a function of the measured observations $\gls*{outputc}_{k}$, which are unknown during planning. To address this issue, we follow \cite{platt2010belief} in planning according to the maximum likelihood observation (MLO) assumption. This gives rise to the simplified belief dynamics
\begin{equation}\label{eq:beliefsyssimple}
\gls*{meanvec}_{k+1} = A\gls*{meanvec}_k + B\gls*{inputc}_k, \quad \gls*{covmat}_{k+1} = \gls*{inovar}_k - \boldsymbol{K}_k C \gls*{inovar}_k.
\end{equation} 
 
Other approaches such as \cite{van2012efficient,indelman2014planning,indelman2016towards} avoid the \gls*{mlo} assumption by considering $\gls*{meanvec}_{k+1}$ as a random variable, i.e., $\gls*{meanvec}_{k+1} \sim \gls*{normal}(\gls*{f}_k,\boldsymbol{K}_k C \gls*{inovar}_k)$. \hl{While we focus on belief-space planning under the \gls*{mlo} assumption, we expect that extending our approach to more sophisticated belief dynamics will be relatively straightforward.}
   
\subsection{Probabilistic Temporal Logic over Reals}

We adopt an extension of temporal logic over reals (RTL) \cite{lin2014mission} to formally represent task requirements for System (\ref{eq:prsystem}). \hl{We first define a run of this system as a sequence $\rho = \gls*{belief}_0\gls*{inputc}_0\gls*{outputc}_0\gls*{belief}_1\dots$ with an initial belief state $\gls*{beliefini} \sim \gls*{normal}(\gls*{meanvecini},\gls*{covmatini})$. A sequence of belief states in a run $\rho$ is called a path: $\gls*{runpr} = \gls*{belief}_0\gls*{belief}_1\gls*{belief}_2\dots$}. This definition allows us to formally design specifications over $\gls*{runpr}$ using \gls*{prtl}.    

\gls*{prtl} formulas are defined recursively over a finite set of predicates $\gls*{predset}$ according to the following grammar: 
\begin{equation*}
    \gls*{stlformula} := \idstlpredpr | \gls*{negpr} \idstlpredpr | \gls*{stlformula}_1 \gls*{and} \gls*{stlformula}_2 | \gls*{stlformula}_1 \gls*{or} \gls*{stlformula}_2 | \gls*{stlformula}_1 \gls*{until} \gls*{stlformula}_2 | \gls*{stlformula}_1 \gls*{release} \gls*{stlformula}_2,  
\end{equation*}
where $\idstlpredpr \in \gls*{predset}$ is a probabilistic atomic predicate and $\gls*{stlformula}$, $\gls*{stlformula}_1$, $\gls*{stlformula}_2$  are \gls*{prtl} formulas. Each predicate is determined by a tolerance $\gls*{tolpr} \in [0,1]$ and the sign of the function  $\gls*{stlfunc}(\gls*{runpr}) = \gls*{cscalar} - \gls*{cvec}^\intercal \gls*{statec} $ (where $\gls*{cvec} \in \mathbb{R}^{\gls*{statecnb}}$ and $\gls*{cscalar} \in \mathbb{R}$). We denote the fact that a path $\gls*{runpr}$ satisfies a \gls*{prtl} formula $\gls*{stlformula}$ with $\gls*{runpr} \gls*{sat} \gls*{stlformula}$. \hl{Intuitively, this means that the trajectory of System (\ref{eq:prsystem}) fulfills the desired properties encoded in the specification $\gls*{stlformula}$.}

We write $\gls*{runpr} \gls*{sat}_{k} \gls*{stlformula}$ if the path $\gls*{belief}_k\gls*{belief}_{k+1}\dots$ satisfies $\gls*{stlformula}$.  Formally, the following semantics define the validity of a formula $\gls*{stlformula}$ with respect to the path $\gls*{runpr}$, where $\gls*{runpr} \gls*{sat} \gls*{stlformula}$ if and only if $\gls*{runpr} \gls*{sat}_0 \gls*{stlformula}$ and:   
\begin{itemize}
  \item $\gls*{runpr} \gls*{sat}_{k} \idstlpredpr$ if and only if $P(\gls*{stlfunc}(\gls*{statec}_k) \geq 0) > 1 - \gls*{tolpr}$,
  \item $\gls*{runpr} \gls*{sat}_{k} \gls*{negpr} \idstlpredpr$ if and only if $P(-\gls*{stlfunc}(\gls*{statec}_k) \geq 0) > 1 - \gls*{tolpr}$,
  \item $\gls*{runpr} \gls*{sat}_{k} \gls*{stlformula}_1 \gls*{and} \gls*{stlformula}_2$ if and only if $\gls*{runpr} \gls*{sat}_{k} \gls*{stlformula}_1$ and $\gls*{runpr} \gls*{sat}_{k} \gls*{stlformula}_2$,
  \item $\gls*{runpr} \gls*{sat}_{k} \gls*{stlformula}_1 \gls*{or} \gls*{stlformula}_2$ if and only if $\gls*{runpr} \gls*{sat}_{k} \gls*{stlformula}_1$ or $\gls*{runpr} \gls*{sat}_{k} \gls*{stlformula}_2$,
  \item $\gls*{runpr} \gls*{sat}_{k} \gls*{stlformula}_1 \gls*{until} \gls*{stlformula}_2$ if and only if $\exists {k^\prime} \geq k$ s.t. $\gls*{runpr} \gls*{sat}_{{k^\prime}}\gls*{stlformula}_2$, and $\forall k \leq {k^{\prime\prime}} \leq {k^\prime}\, \gls*{runpr} \gls*{sat}_{{k^{\prime\prime}}}\gls*{stlformula}_1$,
  \item $\gls*{runpr} \gls*{sat}_{k} \gls*{stlformula}_1 \gls*{release} \gls*{stlformula}_2$ if and only if $\exists {k^\prime} \geq k$ s.t. $\gls*{runpr} \gls*{sat}_{{k^\prime}}\gls*{stlformula}_1$, and $\gls*{runpr} \gls*{sat}_{{k^{\prime\prime}}}\gls*{stlformula}_2\, \forall k \leq {k^{\prime\prime}} \leq {k^\prime}$ or $\gls*{runpr} \gls*{sat}_{{k^\prime}}\gls*{stlformula}_2\, \forall t_{k^\prime} \geq k$, 
\end{itemize}  

We can derive other temporal operators such as \textit{eventually} $\gls*{eventually} \gls*{stlformula} = \gls*{true} \gls*{until} \gls*{stlformula}$ and \textit{always} $\gls*{always} \gls*{stlformula} = \gls*{false} \gls*{release} \gls*{stlformula}$. \hl{For example, the specification that ``$\gls*{statec}$ should be positive with at least $95\%$ probability" can be encoded as $\varphi = \gls*{always} \idstlpredpr[_{0.05}^{\gls*{statec}}]$, where $\gls*{stlfunc}(\gls*{runpr}) =\gls*{statec}$.}

\subsection{Problem Formulation}

The problem of task and motion planning with active perception for System (\ref{eq:prsystem}) can now be formulated in terms of a \gls*{prtl} specification and belief space dynamics. This problem is formally defined as follows:

\begin{problem}\label{prob:pr_1}
Given a stochastic system (\ref{eq:prsystem}), a \gls*{prtl} formula $\gls*{stlformula}$, and a prior belief state $\gls*{beliefini} \sim \gls*{normal}(\gls*{meanvecini},\gls*{covmatini})$, determine whether there exists a path $\gls*{runpr}$ of System (\ref{eq:prsystem}) that satisfies $\gls*{stlformula}$  and return the corresponding control inputs  $\gls*{inputc}_0\gls*{inputc}_1\gls*{inputc}_2\dots$.
\end{problem}

\section{Proposed Approach}\label{sec:hpomdp}

Our proposed approach, \gls*{idprtl}, is illustrated in Figure \ref{fig:diag2}. We reformulate the planning problem as an \textit{existential model checking} problem and use \textit{counterexample-guided synthesis} \cite{reynolds2015counterexample} to find a path that satisfies \gls*{prtl} specification $\gls*{stlformula}$. 

Specifically, two interacting layers, discrete and continuous, work together to overcome nonconvexities in the logical specification. At the discrete layer, existential \gls*{bmc} for an abstraction of System (\ref{eq:prsystem}) acts as a \textit{proposer}, generating a discrete path that satisfies the specification. Satisfying discrete plans are passed to the continuous layer, which acts as a \textit{teacher}. At the continuous layer, a sampling-based search is applied to check whether a discrete plan is feasible. If the feasibility test does not pass, a counterexample is provided to update the abstraction. This forces the discrete layer to propose a different discrete path. \hl{This process repeats until either a feasible plan is found or there is no satisfying discrete plan.}

\begin{figure}
\tikzstyle{block} = [draw, rounded corners=1mm, color=blue, text=black, line width=0.5mm, rectangle, minimum height=3em, minimum width=9.9em]
\tikzstyle{block2} = [draw, color=black, text=black, line width=0.5mm, rectangle, minimum height=3em, minimum width=9em]
\centering
\begin{tikzpicture}[auto, >=latex', scale=0.85, transform shape]	
\node[block] (dplan) {\begin{tabular}{c}Existential \\ Model Checking\end{tabular} };
\node[block,right=2.5cm of dplan] (fsearch) {\begin{tabular}{c}Feasibility \\ Search\end{tabular} }; 
\draw[->] ([yshift=1mm] dplan.east) -- node[pos=0.5,above] {Discrete path} ([yshift=1mm] fsearch.west); 
\draw[->] ([yshift=-1mm] fsearch.west) -- node[pos=0.5,below] {Counterexample} ([yshift=-1mm] dplan.east); 
\draw[->] (fsearch.south) -- node[pos=0.9,below] {Nominal run} ++(0,-0.5cm); 
\draw[->] (dplan.south) -- node[pos=0.9,below] {Unsatisfiable} ++(0,-0.5cm); 


\end{tikzpicture}
\caption{Bounded Existential Model Checking (BMC) proposes runs of an abstract system, while sampling-based feasibility search finds a corresponding run of the actual system. Feasibility search uses belief-space planning methods, incorporating active perception. } 
\label{fig:diag2}
\end{figure}
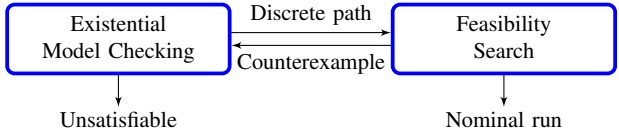

\subsection{Discrete Existential Model Checking}

We aim to find a path \gls*{runpr} of System (\ref{eq:prsystem}) that satisfies a \gls*{prtl} formula. To do so, we propose a \hl{finite} abstraction that captures the requirements of \gls*{stlformula}. \hl{The finite nature of this abstraction allows us to use counterexample-guided-synthesis to find a dynamically feasible discrete path}. We propose this abstraction as a deterministic Kripke structure: 

\begin{definition}\cite{biere1999symbolic}
A Kripke structure is a tuple $M = (S,I,\rightarrow,\mathcal{L})$, where $S$ is a finite set of states, $I \subseteq S$ are initial states, $\rightarrow \subseteq S \times S$ is a transition relation, and $\mathcal{L}:S \mapsto 2^{AP}$ is a labeling function which maps to atomic propositions $AP$.  
\end{definition}

To define a Kripke abstraction for System (\ref{eq:prsystem}), first consider the subspace $\gls*{poly}_i = \{ \gls*{belief} |  \bigwedge_{\idstlpredpr \in \gls*{predset}_i} \idstlpredpr\}$, which defines a region of the belief space where all atomic propositions $\gls*{predset}_i \subseteq \gls*{predset}$ hold. Considering all subsets of $\gls*{predset}$, we can construct the space $\{\gls*{poly}_1, \dots, \gls*{poly}_{2^{|\gls*{predset}|}} \}$. We now define the following abstraction:

\begin{definition}\label{def:abs_operator}
The abstraction of a belief state $\gls*{belief}$ is the set of all subspaces $\gls*{poly}_i$ that contain $\gls*{belief}$: $$\gls*{abs}(\gls*{belief}) = \{ \gls*{poly}_i \in \{\gls*{poly}_1, \dots, \gls*{poly}_{2^{|\gls*{predset}|}} \}\mid \gls*{belief} \in \gls*{poly}_i \}$$
\end{definition}
   
\hl{Each subspace $\gls*{poly}_i$ is convex. This is because each predicate $\idstlpredpr[_{\gls*{tolpr}}^{\gls*{cscalar} - \gls*{cvec}^\intercal}]$ can be translated to a convex constraint:}
\begin{equation*}
\begin{aligned}
\idstlpredpr[_{\gls*{tolpr}}^{\gls*{cscalar} - \gls*{cvec}^\intercal}] =
P(\gls*{cscalar} - \gls*{cvec}^\intercal \gls*{statec}_k \geq 0) > 1 - \gls*{tolpr} \\
= P(\gls*{stdnormal} \leq -\frac{\gls*{cvec}^\intercal \gls*{meanvec}_k - \gls*{cscalar}}{\gls*{cvec} \gls*{covmat} \gls*{cvec}^\intercal}) > 1 - \gls*{tolpr}
= \Phi(\frac{\gls*{cvec}^\intercal \gls*{meanvec}_k - \gls*{cscalar}}{\gls*{cvec} \gls*{covmat} \gls*{cvec}^\intercal}) < \gls*{tolpr} \\
= \gls*{cvec}^\intercal \gls*{meanvec} -\Phi^{-1}(\gls*{tolpr}) \| \sqrt{\gls*{covmat}} \gls*{cvec} \|_2 < \gls*{cscalar},
\end{aligned}
\end{equation*}
\hl{where $\gls*{stdnormal} \sim \gls*{normal}(0,\gls*{eye}_{n})$ is a standard Gaussian random variable and $\Phi(z)$ is its cumulative distribution function (CDF). } The
volume of each $\gls*{poly}_i$ can be bounded with a convex polytope,
which we denote $E[\gls*{poly}_i]$.
\begin{definition}\label{def:abs} 
Given a \gls*{prtl} formula \gls*{stlformula}, $M_{\gls*{stlformula}} = (\mathbb{P},{\gls*{poly}_0},\rightarrow,\mathcal{L})$ is a Kripke structure where $\mathbb{P} = \{ \gls*{poly}_1,\dots,\gls*{poly}_{|\mathbb{P}|} \}$ are a finite set of convex partitions of the belief space, $\gls*{poly}_0 = \{ \gls*{abs}(\gls*{beliefini}) \}$, $(\gls*{poly},\gls*{poly}^\prime) \in \rightarrow$ if and only if $(E[\gls*{poly}] \cup E[\mathbb{P}_i(\gls*{poly})]) \cap (E[\gls*{poly}^\prime] \cup E[\mathbb{P}_i(\gls*{poly}^\prime)]) \neq \emptyset$ with $\mathbb{P}_i(\gls*{poly}) = \{ \gls*{poly}_i \in \mathbb{P} | \mathcal{L}(\gls*{poly}_i) = \mathcal{L}(\gls*{poly})$ and $E[\gls*{poly}_i] \cap E[\gls*{poly}] \neq \emptyset \}$, $AP = \{ \Phi_1, \dots, \Phi_N \}$ are state subformulas of \gls*{stlformula}, and $\Phi_i \in \mathcal{L}(\gls*{poly})$ if and only if $\gls*{belief} \gls*{sat} \Phi_i$ for all $\gls*{belief} \in \gls*{poly}$.   
\end{definition}

Note that any \gls*{prtl} formula \gls*{stlformula} has an unique corresponding formula $\tilde{\gls*{stlformula}}$ over the subformulas $sub(\gls*{stlformula})$. For example, for a formula $\gls*{stlformula} = (\idstlpredpr[_{\gls*{tolpr}}^{\gls*{stlfunc}_1}] \gls*{and} \idstlpredpr[_{\gls*{tolpr}}^{\gls*{stlfunc}_2}]) \gls*{until} (\idstlpredpr[_{\gls*{tolpr}}^{\gls*{stlfunc}_3}] \gls*{or} \idstlpredpr[_{\gls*{tolpr}}^{\gls*{stlfunc}_4}])$, $AP = \{ \gls*{pred}_1, \gls*{pred}_2  \}$ is the set of atomic propositions of $M_{\gls*{stlformula}}$, where $\gls*{pred}_1 = \idstlpredpr[_{\gls*{tolpr}}^{\gls*{stlfunc}_1}] \gls*{and} \idstlpredpr[_{\gls*{tolpr}}^{\gls*{stlfunc}_2}]$ and $\gls*{pred}_2 = \idstlpredpr[_{\gls*{tolpr}}^{\gls*{stlfunc}_3}] \gls*{or} \idstlpredpr[_{\gls*{tolpr}}^{\gls*{stlfunc}_4}]$. Consequently, the simplified \gls*{prtl} formula is $\tilde{\gls*{stlformula}} = \gls*{pred}_1 \gls*{until} \gls*{pred}_2$.

\begin{theorem}\label{theo:abstraction}
The abstraction of every path \gls*{runpr} in System (\ref{eq:prsystem}) that satisfies a \gls*{prtl} formula \gls*{stlformula}, i.e., $\gls*{abs}(\gls*{belief}_0)\gls*{abs}(\gls*{belief}_1)\gls*{abs}(\gls*{belief}_2)\dots$, is a path in $M_{\gls*{stlformula}}$ that satisfies the correponding $\tilde{\gls*{stlformula}}$ over $AP$.
\end{theorem}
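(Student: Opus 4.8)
The plan is to prove the claim --- which is the soundness direction of the abstraction, asserting that every concrete satisfying path is shadowed by an abstract one --- by verifying three ingredients in sequence: (i) that the word $\gls*{abs}(\gls*{belief}_0)\gls*{abs}(\gls*{belief}_1)\gls*{abs}(\gls*{belief}_2)\dots$ is a genuine path of $M_{\gls*{stlformula}}$, i.e., it begins in the initial state and every consecutive pair lies in $\rightarrow$; (ii) that the labelling is preserved index by index, i.e., for every state subformula $\Phi_i\in AP$ and every $k$ one has $\Phi_i\in\mathcal{L}(\gls*{abs}(\gls*{belief}_k))$ iff $\gls*{runpr}\gls*{sat}_k\Phi_i$; and (iii) that (ii) lifts through the Boolean and temporal operators to give $\gls*{abs}(\gls*{belief}_0)\gls*{abs}(\gls*{belief}_1)\dots\gls*{sat}\tilde{\gls*{stlformula}}$. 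Throughout I would identify $\gls*{abs}(\gls*{belief}_k)$ with the convex cell of the partition $\mathbb{P}$ that contains $\gls*{belief}_k$; because the cells are cut out by a Boolean assignment to finitely many atomic predicates, this cell is unique for all but a measure-zero set of boundary beliefs, and on a boundary any adjacent cell may be picked without harm, since adjacent cells sharing a boundary are $\rightarrow$-related.

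For (i), the initial symbol is immediate, since $\gls*{poly}_0=\{\gls*{abs}(\gls*{beliefini})\}$ in Definition~\ref{def:abs} and $\gls*{belief}_0=\gls*{beliefini}$. For the transitions I would fix $k$, put $\gls*{poly}=\gls*{abs}(\gls*{belief}_k)$ and $\gls*{poly}^\prime=\gls*{abs}(\gls*{belief}_{k+1})$, note that $\gls*{belief}_k\in E[\gls*{poly}]\subseteq E[\gls*{poly}]\cup E[\mathbb{P}_i(\gls*{poly})]$ and $\gls*{belief}_{k+1}\in E[\gls*{poly}^\prime]\subseteq E[\gls*{poly}^\prime]\cup E[\mathbb{P}_i(\gls*{poly}^\prime)]$, and then produce a single common point of those two $E[\cdot]$-enlargements. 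This is the only place at which the concrete belief dynamics~(\ref{eq:beliefsyssimple}) enter: one must show that the one-step image of $\gls*{poly}$ under~(\ref{eq:beliefsyssimple}) lies inside $E[\gls*{poly}]\cup E[\mathbb{P}_i(\gls*{poly})]$ --- intuitively, the convex over-approximations $E[\cdot]$, together with the same-label merging $\mathbb{P}_i$, must be coarse enough to absorb the displacement of the belief over one sampling period --- whereupon $\gls*{belief}_{k+1}$ itself witnesses $(\gls*{poly},\gls*{poly}^\prime)\in\rightarrow$. I expect this to be the main obstacle: the transition relation of Definition~\ref{def:abs} is defined purely geometrically (cells whose over-approximations touch), so soundness rests on relating that geometry to the genuine reachable set of the Kalman-filter update~(\ref{eq:beliefsyssimple}), and the proof must make precise in what sense $E[\cdot]$ and a bound on $\|\gls*{belief}_{k+1}-\gls*{belief}_k\|$ are compatible --- or, failing that, surface whatever extra hypothesis is needed.

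For (ii), I would invoke the computation recorded just before Definition~\ref{def:abs}: each atomic predicate is equivalent to a convex constraint of the form $\gls*{cvec}^\intercal\gls*{meanvec}-\Phi^{-1}(\gls*{tolpr})\|\sqrt{\gls*{covmat}}\,\gls*{cvec}\|_2<\gls*{cscalar}$, whose truth value is simply which side of the corresponding cell face a belief lies on; hence the truth value of any $\Phi_i\in AP$, being a Boolean combination of such predicates, is constant on each cell. Combined with the clause ``$\Phi_i\in\mathcal{L}(\gls*{poly})$ iff $\gls*{belief}\gls*{sat}\Phi_i$ for all $\gls*{belief}\in\gls*{poly}$'' of Definition~\ref{def:abs}, this gives $\Phi_i\in\mathcal{L}(\gls*{abs}(\gls*{belief}_k))\iff\gls*{runpr}\gls*{sat}_k\Phi_i$. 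Finally, for (iii) I would run a structural induction on $\tilde{\gls*{stlformula}}$, which is built over $AP$ from $\gls*{and}$, $\gls*{or}$, $\gls*{until}$, $\gls*{release}$ by exactly the recursion of the PRTL semantics of Section~\ref{sec:pr_preliminaries}: the base case $\Phi_i\in AP$ is precisely (ii), and the Boolean and temporal cases go through because the satisfaction clauses for those operators over the abstract path of $M_{\gls*{stlformula}}$ and over $\gls*{runpr}$ are the same recursion once the per-index atomic labels agree --- which they do by (ii) --- while (i) guarantees there is a bona fide path on which to evaluate them. Instantiating at $k=0$ yields $\gls*{abs}(\gls*{belief}_0)\gls*{abs}(\gls*{belief}_1)\dots\gls*{sat}\tilde{\gls*{stlformula}}$, which is the theorem.
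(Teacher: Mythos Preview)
Your decomposition into (i)--(iii) is the right scaffold, and parts (ii) and (iii) are essentially what the paper does implicitly: the first two sentences of the paper's proof record that the predicate labels are constant on cells and that satisfaction of $\tilde{\gls*{stlformula}}$ is preserved under passing to subsets with the same label, which is your (ii) plus the base case of your structural induction (iii).

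The gap is in (i), specifically the transition step. You try to establish $(\gls*{abs}(\gls*{belief}_k),\gls*{abs}(\gls*{belief}_{k+1}))\in\rightarrow$ from the belief dynamics alone, by arguing that the one-step image of $\gls*{poly}$ under~(\ref{eq:beliefsyssimple}) stays inside $E[\gls*{poly}]\cup E[\mathbb{P}_i(\gls*{poly})]$. You correctly flag this as the obstacle, and it is a real one: nothing in the setup bounds $\|\gls*{belief}_{k+1}-\gls*{belief}_k\|$ relative to the size of the $E[\cdot]$-enlargements, so without an extra hypothesis on the sampling period or the input set this step does not close. The paper sidesteps this entirely by using the one hypothesis you never invoke in (i): that $\gls*{runpr}\gls*{sat}\gls*{stlformula}$. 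Its argument is by contradiction---assume a concrete step $\gls*{belief}_k\to\gls*{belief}_{k+1}$ with $(\gls*{abs}(\gls*{belief}_k),\gls*{abs}(\gls*{belief}_{k+1}))\notin\rightarrow$, and then argue that the two cells being separated in the sense of Definition~\ref{def:abs} forces some pair of temporally linked subformulas $\gls*{stlformula}_1,\gls*{stlformula}_2$ of $\gls*{stlformula}$ to flip truth values between $k$ and $k+1$ in a pattern the \gls*{prtl} semantics forbid, contradicting $\gls*{runpr}\gls*{sat}\gls*{stlformula}$. In other words, the paper proves the weaker statement ``every \emph{satisfying} concrete path abstracts to a path of $M_{\gls*{stlformula}}$'' rather than your attempted ``every concrete path abstracts to a path of $M_{\gls*{stlformula}}$,'' and that weakening is exactly what lets the transition argument go through without a reachability bound. (The paper's contradiction is itself quite terse---it invokes an intermediate instant $t_k\le t'\le t_{k+1}$ in what is a discrete-time model---but the structural point stands: the satisfaction hypothesis, not the dynamics, is what drives the transition claim.)
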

\begin{proof}
We will prove the theorem by induction. First, note that $\{\idstlpredpr \mid \gls*{runpr} \gls*{sat}_{k} \idstlpredpr\} \subseteq \gls*{abs}(\gls*{belief}_k)$. If a trace in $M_{\gls*{stlformula}}$ satisfies $\tilde{\gls*{stlformula}}$, subsets $\tilde{\gls*{poly}}_k \subseteq \gls*{poly}_k$ also satisfy $\tilde{\gls*{stlformula}}$ since $\mathcal{L}(\tilde{\gls*{poly}}_k) = \mathcal{L}(\gls*{poly}_k)$. Moreover, by Definition \ref{def:abs}, $\gls*{abs}(\gls*{beliefini}) = \gls*{poly}_0$. Finally, if there exists $\gls*{inputc}_k \in \gls*{inputcdom}$ such that $\gls*{belief}_{k+1} = f(\gls*{belief}_k,\gls*{inputc}_k,\gls*{outputc}_k)$, then $(\gls*{abs}(\gls*{belief}_k),\gls*{abs}(\gls*{belief}_{k+1})) \in \rightarrow$. We will prove this claim by contradiction. Assume that there exists $\gls*{inputc}_k \in \gls*{inputcdom}$ such that $\gls*{belief}_{k+1} = f(\gls*{belief}_k,\gls*{inputc}_k,\gls*{outputc}_k)$, but $(\gls*{abs}(\gls*{belief}_k),\gls*{abs}(\gls*{belief}_{k+1})) \not\in \rightarrow$. Consider the case of two subformulas $\gls*{stlformula}_1$ and $\gls*{stlformula}_2$ combined with  a temporal operator in \gls*{stlformula}. In this case, there exists an instant $t_k \leq t^\prime \leq t_{k+1}$ such that $\gls*{runpr} \gls*{sat}_{(t^{\prime-})} \gls*{stlformula}_1$, $\gls*{runpr} \gls*{unsat}_{(t^{\prime-)}} \gls*{stlformula}_2$, $\gls*{runpr} \gls*{unsat}_{(t^{\prime+)}} \gls*{stlformula}_1$, and $\gls*{runpr} \gls*{sat}_{(t^{\prime+})} \gls*{stlformula}_2$. But this violates the \gls*{prtl} semantics. Therefore, this path cannot satisfy a \gls*{prtl} formula.
\end{proof} 

   
\hl{Given an abstraction $M_{\gls*{stlformula}}$ and an abstracted formula $\tilde{\gls*{stlformula}}$, we can use \gls*{bmc} tools such as NuSMV \cite{CAV02} to find a path $\gls*{polyseq}_K = \gls*{poly}_0\dots \gls*{poly}_{L-1} (\gls*{poly}_L\dots\gls*{poly}_K)^\omega$ that satisfies $\tilde{\gls*{stlformula}}$ in ($K,L$)-loop form \cite{biere1999symbolic}. Such tools find a satisfying discrete path if one exists, and otherwise indicate that no such path exists.}

\subsection{Feasibility Search}\label{sec:feas}

\hl{Existential Model Checking generates a discrete path $\gls*{polyseq}_K$ which satisfies $\tilde{\gls*{stlformula}}$.
Given such a path, feasibility search looks for a dynamically feasible path $\gls*{runpr}$ corresponding to $\gls*{polyseq}_K$. This problem can be encoded in the following feasibility problem:}
\begin{equation}\label{eq:feasproblem}
\begin{aligned}
\textbf{find} & \hspace{0.1cm} \gls*{runpr} \\
\textbf{s.t. } & \gls*{meanvec}_{0} = \gls*{meanvecini}, \gls*{covmat}_0 = \gls*{covmatini}, \\
& \text{if } L\leq K \text{ then } \gls*{meanvec}_H = \gls*{meanvec}_{\mathbb{I}(L)}, \gls*{covmat}_H \leq \gls*{covmat}_{\mathbb{I}(L)}, \\
\forall k = 1..H :~ &  \gls*{meanvec}_{k+1} = A\gls*{meanvec}_k + B\gls*{inputc}_k, \\
&    \gls*{covmat}_{k+1} = \gls*{inovar}_k - \boldsymbol{K}_k C \gls*{inovar}_k, \\
&   \gls*{cvec}_{\mathbb{I}(k)}^\intercal \gls*{meanvec} -\Phi^{-1}(\gls*{tolpr}_{\mathbb{I}(k)}) \| \sqrt{\gls*{covmat}} \gls*{cvec} \|_2 < \gls*{cscalar}_{\mathbb{I}(k)},
\end{aligned}
\end{equation}
\hl{where $\mathbb{I}$ is an increasing index function that maps the instant $k$ to the relevant constraints $\gls*{poly}_i \in \gls*{polyseq}_K$.}
\hl{In formulating this problem, we use the MLO belief dynamics (\ref{eq:beliefsyssimple}) since we do not have access to future observations, as discussed in Section \ref{sec:belief}. This simplifying assumption renders the feasibility problem computationally tractable, but limits the completeness of our approach (see Section \ref{sec:correctness}). 

Even with the MLO belief dynamics, the feasibility problem is non-convex. For this reason, we turn to rapidly exploring random tree (RRT) search. We propose modified sampling (\textsc{Sample}) and steering (\textsc{Propagate}) strategies which push the search towards trajectories within the discrete path $\gls*{polyseq}_K$. 

Furthermore, we assume that the noise covariance $\gls*{noise-std}_k$ may be state dependent. In this case, a sparse search is beneficial as it quickly samples from many nearby states. Thus we draw on techniques from SPARSE-RRT \cite{littlefield2013efficient}. This includes the function \textsc{BestNearest}, which searches for active vertices $\delta$-near the sampled state, and the function \textsc{Drain}, which removes new vertices that are too close to other active vertices.}


In \textsc{Sample}, the probability of sampling a belief state in the convex space $\gls*{poly}_k$ is inversely proportional to the number of belief states in $\mathbb{V}_{active}$ that are in $\gls*{poly}_k$. This encourages sampling points in those $\gls*{poly}_k$ that are relatively unexplored. 

\begin{algorithm}  
	\caption{\textsc{fSearch}($\gls*{polyseq}_K$,$N$)}\label{algo:feas}
    \begin{algorithmic}[1]    
    {\STATE $\mathbb{V}_{act} \gets \{ \gls*{beliefini} \}; \mathbb{V}_{inact} \gets \emptyset; \mathbb{E} \gets \emptyset; i \gets 0;$
    \STATE $\mathbb{V} = \mathbb{V}_{act} \cup \mathbb{V}_{inact}; G = \{ \mathbb{V}, \mathbb{E} \};$
			\FOR{$i = 1..K N$} 	  				
				\STATE $\gls*{belief}_{rand} \leftarrow \text{Sample}(\mathbb{V}_{act},\gls*{polyseq}_K);$\label{algo:feas_sampling1}
				\STATE $\langle \gls*{belief}_{near}, k \rangle \leftarrow \text{BestNearest}(\mathbb{V}_{act}, \gls*{belief}_{rand},\gls*{polyseq}_K,\gls*{tolfeas}_{near});$
				\STATE $\langle \gls*{belief}_{new}, \gls*{inputc} \rangle \leftarrow \text{Propagate}(\gls*{belief}_{near},\gls*{poly}_k,\gls*{poly}_{k+1});$
				\STATE $\mathbb{V}_{act} \leftarrow \mathbb{V}_{act} \cup \{ \gls*{belief}_{new} \};$  
				\STATE $\mathbb{E} \gets \mathbb{E} \cup \{ (\gls*{belief}_{near},\gls*{inputc},\gls*{belief}_{new}) \};$
				\STATE $\text{Drain}(\gls*{tolfeas}_{drain}, \gls*{belief}_{new}, G)$
				\IF{$\gls*{belief}_{new} \in \gls*{poly}_{k_{last}+1}$}         \STATE $k_{last} \leftarrow k_{last}+1;$
				\ENDIF
			\ENDFOR\label{algo:feas_sampling2}
	\STATE $\langle \gls*{runpr}, \gls*{inputc} \rangle \leftarrow \text{FeasRun}(G,V_{act});$ \label{algo:feas_cplan} 
	\RETURN $\langle \gls*{runpr}, \gls*{inputc}\rangle;$}
\end{algorithmic} 
\end{algorithm}  

The basic idea used in \textsc{Propagate}, presented as Algorithm \ref{algo:prop1}, is to account for the convexity of the constraints. Since the belief dynamics are nonlinear and underactuated, we (1) sample a belief mean from $\gls*{poly}_k \cup \gls*{poly}_{k+1}$ (line \ref{algo:prop1_sample}), (2) generate a run  towards the sampled mean (lines \ref{algo:prop1_mean1}-\ref{algo:prop1_mean2}), and (3) generate a run that minimizes the belief covariance using the first run as the nominal run (lines \ref{algo:prop1_belief1}-\ref{algo:prop1_belief2}). The first two steps essentially ensure that a trajectory is found that transitions from $\gls*{poly}_k$ to $\gls*{poly}_{k+1}$. \hl{ The last step encodes active perception: we find nominal trajectories that minimize uncertainty in the belief. 

In the case of a loop (line \ref{algo:prop1_loop}) the basic idea is to close the loop by solving the following optimization problem}
\begin{align*}
    [\gls*{inputc}_{0}^*,\dots,\gls*{inputc}_{\gls*{statecnb}-1}^*] &  = \argmin_{[\gls*{inputc}_{0},\dots,\gls*{inputc}_{\gls*{statecnb}-1}]} [\gls*{inputc}_{0}^\intercal\gls*{inputc}_{0},\dots,\gls*{inputc}_{\gls*{statecnb}-1}^\intercal\gls*{inputc}_{\gls*{statecnb}-1}] \\
    \text{s.t. } & \gls*{meanvec}_{final} - A^{\gls*{statecnb}}\gls*{meanvec}_{near} = \mathcal{C}[\gls*{inputc}_{0}^\intercal,\dots,\gls*{inputc}_{\gls*{statecnb}-1}^\intercal]^\intercal
\end{align*}
\hl{
where $\mathcal{C}$ is the controllability matrix and a closed-form solution is given on line \ref{algo:prop2_uc} of Algorithm \ref{algo:prop1}.}

\begin{algorithm}
	\caption{\textsc{Propagate}($\gls*{belief}_{near},\gls*{poly}_k,\gls*{poly}_{k+1}$)}\label{algo:prop1} 
    \begin{algorithmic}[1]   
    \IF{in loop}\label{algo:prop1_loop}
		\STATE $\gls*{belief}_{final} \gets \text{Sample}(\mathbb{V}_{act},\gls*{poly}_{k+1});$
		\STATE {$[\gls*{inputc}_{0}^\intercal,\dots,\gls*{inputc}_{\gls*{statecnb}-1}^\intercal]^\intercal \gets \mathcal{C}^{-1} (\gls*{meanvec}_{final} - A^{\gls*{statecnb}}\gls*{meanvec}_{near});$} \label{algo:prop2_uc}
		\FOR{$k = 0..\gls*{statecnb}-1$} 
			\STATE {$\gls*{inputc}_{new,k} \gets \argmin\limits_{H_u^\prime \gls*{inputc} \leq \boldsymbol{c}_u(\gls*{belief}_{new,k})} \|  \gls*{inputc} + \gls*{inputc}_{k} \|_1;$}\label{algo:prop2_ucons}
			 \STATE {$\gls*{belief}_{new,k+1} \gets \tilde{f}(\gls*{belief}_{new,k},\gls*{inputc}_{new,k});$}
		\ENDFOR 
		{\RETURN $\langle \gls*{belief}_{new,\gls*{statecnb}}, \gls*{inputc}_{new} \rangle;$}
	\ELSE{}
        \STATE {$\langle \gls*{meanvec}_{final}, T \rangle \gets $Sample$(\gls*{poly}_k,\gls*{poly}_{k+1})$
        \STATE $\gls*{belief}_{final} \gets \gls*{normal}(\gls*{meanvec}_{final},\boldsymbol{0}_{n\times n});$} \label{algo:prop1_sample} 
    	\STATE {$F \leftarrow \text{LQR}(A,B,Q,R), \gls*{belief}_{0}^\prime \gets \gls*{belief}_{near};$ }\label{algo:prop1_mean1} 
    		\FOR{$k = 0..T-1$} 
    			\STATE {\small$\gls*{inputc}_{k}^\prime \gets \argmin\limits_{H_u^\prime \gls*{inputc} \leq \boldsymbol{c}_u(\gls*{belief}_{k}^\prime)} \|  \gls*{inputc} + F(\gls*{meanvec}_{k}^\prime - \gls*{meanvec}_{final}) \|_1;$}
    			\STATE {\small$\gls*{belief}_{k+1}^\prime \gets \tilde{f}(\gls*{belief}_{k}^\prime,\gls*{inputc}_{k}^\prime);$ }\label{algo:prop1_mean2}
    		\ENDFOR 
    		\STATE {$F_k \gets \text{B-LQR}(T,\tilde{f},Q,Q_f,R), \gls*{belief}_{new,0}^\prime \gets \gls*{belief}_{near};$ }\label{algo:prop1_belief1}
    		\FOR{$k = 0..T-1$} 
    			\STATE {$\gls*{inputc}_{new,k} \gets \argmin\limits_{H_u^\prime \gls*{inputc} \leq \boldsymbol{c}_u(\gls*{belief}_{new,k})} \|  \gls*{inputc} + F_k(\gls*{belief}_{k}^\prime - \gls*{belief}_{final}) \|_1;$}
    			\STATE {$\gls*{belief}_{new,k+1} \gets \tilde{f}(\gls*{belief}_{new,k}^\prime,\gls*{inputc}_{k}^\prime);$} \label{algo:prop1_belief2}
    		\ENDFOR 
    		{\RETURN $\langle \gls*{belief}_{new,T}, \gls*{inputc}_{new} \rangle$}
		\ENDIF
	\end{algorithmic}
\end{algorithm} 


Algorithm \ref{algo:feas} has similar properties to SPARSE-RRT, including asymptotic near-optimality:
\begin{theorem}\label{theo:fsearch}
Given a discrete path $\gls*{polyseq}_K$ in $K$-loop form, Algorithm \ref{algo:feas} finds a path $\gls*{runpr}_H$ of System (\ref{eq:prsystem}) only if \hl{Eq. \ref{eq:feasproblem}} has a solution. Moreover, if there exists an optimal path $\gls*{runpr}^* \in \gls*{polyseq}_K$ of System (\ref{eq:prsystem}), Algorithm \ref{algo:feas} will eventually generate \hl{a solution to Eq. \ref{eq:feasproblem}}.   
\end{theorem}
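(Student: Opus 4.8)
The plan is to prove the two halves of the statement separately and, in both cases, to reduce the argument to the established guarantees of SPARSE-RRT \cite{littlefield2013efficient}, since Algorithm \ref{algo:feas} differs from it only in the \textsc{Sample} and \textsc{Propagate} subroutines (it shares \textsc{BestNearest} and \textsc{Drain} verbatim). The soundness direction (``finds a path \ldots only if (\ref{eq:feasproblem}) has a solution'') I would establish by a straightforward invariant on the tree $G$ maintained by Algorithm \ref{algo:feas}: every vertex added by \textsc{Propagate} is, by lines \ref{algo:prop2_ucons}--\ref{algo:prop1_belief2} of Algorithm \ref{algo:prop1}, the image of its parent under the MLO belief dynamics $\tilde f$ (Eq.~(\ref{eq:beliefsyssimple})) driven by an admissible input $H_u\gls*{inputc}\le\boldsymbol{c}_u$; moreover the region tests in \textsc{BestNearest}/\textsc{Propagate} guarantee that the $k$-th vertex of any root-to-leaf chain lies in $\gls*{poly}_{\mathbb{I}(k)}$, i.e. satisfies the chance constraint $\gls*{cvec}_{\mathbb{I}(k)}^\intercal \gls*{meanvec} - \Phi^{-1}(\gls*{tolpr}_{\mathbb{I}(k)})\|\sqrt{\gls*{covmat}}\gls*{cvec}\|_2 < \gls*{cscalar}_{\mathbb{I}(k)}$. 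Since $\mathbb{V}_{act}$ is seeded with $\gls*{beliefini}$, the initial conditions $\gls*{meanvec}_0=\gls*{meanvecini}$, $\gls*{covmat}_0=\gls*{covmatini}$ hold, and in the $K$-loop case the equality solved on line \ref{algo:prop2_uc} together with the monotone covariance contraction of the Kalman/B-LQR recursion enforces $\gls*{meanvec}_H=\gls*{meanvec}_{\mathbb{I}(L)}$ and $\gls*{covmat}_H\le\gls*{covmat}_{\mathbb{I}(L)}$. Hence any chain returned by \textsc{FeasRun} on line \ref{algo:feas_cplan} is an explicit witness that (\ref{eq:feasproblem}) is feasible, which is exactly the ``only if'' claim.

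For the completeness half I would verify the three hypotheses under which the SPARSE-RRT convergence proof applies. First, positive sampling density on each partition cell: by construction the probability that \textsc{Sample} draws a belief in $\gls*{poly}_k$ is inversely proportional to the number of active vertices currently in $\gls*{poly}_k$, hence bounded below by a strictly positive constant, and within $\gls*{poly}_k$ the sampling law has full support. Second, local steerability along a tube: given an optimal feasible path $\gls*{runpr}^*\in\gls*{polyseq}_K$, I would build a finite sequence of balls centered on its belief states; using stabilizability of $(A,B)$ and the LQR / B-LQR feedback synthesized on lines \ref{algo:prop1_mean1} and \ref{algo:prop1_belief1}, together with continuity (Lipschitzness) of $\tilde f$ and of the Riccati map $\gls*{covmat}\mapsto\gls*{inovar}-\boldsymbol{K}C\gls*{inovar}$, one shows that from any vertex in one ball \textsc{Propagate} produces, with probability bounded away from zero, a new vertex in the next ball while keeping the mean inside the convex set $\gls*{poly}_k\cup\gls*{poly}_{k+1}$ — here convexity of the cells is what guarantees the intermediate means remain admissible. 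Third, coverage preservation under \textsc{Drain}: since \textsc{Drain}$(\gls*{tolfeas}_{drain},\cdot,\cdot)$ removes only vertices within $\gls*{tolfeas}_{drain}$ of an existing active vertex, taking $\gls*{tolfeas}_{drain}$ below the tube radius (or shrinking it with the iteration count, as in \cite{littlefield2013efficient}) leaves at least one active vertex in each ball of the tube. With these three properties in hand, the standard SPARSE-RRT argument gives that the probability of the tree failing to contain a complete chain of tube-vertices decays to $0$ as $KN\to\infty$, so Algorithm \ref{algo:feas} eventually returns a solution of (\ref{eq:feasproblem}).

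The main obstacle, and where the real work lies, is the nonstandard \textsc{Propagate}: unlike the random control propagation assumed in the SPARSE-RRT analysis, Algorithm \ref{algo:prop1} performs deterministic LQR/B-LQR steering plus a covariance-minimizing second pass, and — in loop form — an exact controllability inversion $\mathcal{C}^{-1}(\gls*{meanvec}_{final}-A^{\gls*{statecnb}}\gls*{meanvec}_{near})$ on line \ref{algo:prop2_uc}. This inversion presumes $(A,B)$ is controllable in $\gls*{statecnb}$ steps, which is stronger than the stabilizability assumed in Section \ref{sec:prsystem}, and under the input polytope $\gls*{inputcdom}$ the projection on lines \ref{algo:prop2_ucons} may not realize the target exactly. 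So the crux is to show the reachable set of \textsc{Propagate} near a feasible trajectory still has positive probability — i.e. that the composition ``sample $\gls*{meanvec}_{final}$ from a convex cell $\to$ steer $\to$ minimize covariance'' yields a $\gls*{tolfeas}$-similar trajectory with positive measure — which I would do either by strengthening the standing hypothesis to controllability of $(A,B)$ so the loop case is well posed, or by restating the completeness claim under that hypothesis and invoking robust-feasibility of $\gls*{runpr}^*$ (a strictly interior tube). Once that continuity/robustness lemma is in place, the asymptotic near-optimality statement follows from the same SST machinery that gives near-optimality of SPARSE-RRT, applied to the cost $\sum_k\gls*{inputc}_k^\intercal\gls*{inputc}_k$ used by \textsc{Propagate}.
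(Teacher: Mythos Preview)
Your proposal is correct and follows essentially the same strategy as the paper: reduce both directions to the established guarantees of SPARSE-RRT \cite{littlefield2013efficient}, arguing that the modified \textsc{Sample} and \textsc{Propagate} still supply the sampling and propagation randomness those guarantees require, and observing that soundness is immediate because \textsc{Propagate} only emits valid path segments of the MLO belief dynamics. The paper's proof is much terser than yours---it simply asserts that uniform sampling is asymptotically recovered and that the random choice of target $\gls*{belief}_{final}$ and duration $T$ inside \textsc{Propagate} provides the needed control-space dispersion, then invokes \cite[Lemma~2]{littlefield2013efficient}---and it does not engage with the controllability-versus-stabilizability or input-projection issues you (rightly) flag.
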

\begin{proof}
SPARSE-RRT \cite{littlefield2013efficient} assumes that (1) the state space is sampled uniformly and (2) \textsc{Propagate} randomly selects controls of propagation. It is easily seen that Assumption (1) is asymptotically guaranteed. Assumption (2) is guaranteed by randomly selecting target states $\gls*{belief}_{final}$ and duration $T$. Thus our modified propagation algorithm gives the dispersion necessary to achieve asymptotic optimality. Therefore, from \cite[Lemma 2]{littlefield2013efficient}, Algorithm \ref{algo:feas} will eventually generate a path $\gls*{runpr}$ in $\gls*{polyseq}_K$, given that there exists an optimal path $\gls*{runpr}^*$  in $\gls*{polyseq}_K$. The proof of soundness is trivial since \textsc{Propagate} only generates valid path segments.    
\end{proof}

\hl{Feasibility search returns a trajectory (a sequence of states and control inputs) that satisfies the given PRTL specification. Since the feasibility problem is solved with sampling-based search, this trajectory is not necessarily unique; however, it is guaranteed to satisfy the specification.}

\subsection{Iterative Deepening Search}

If feasibility search does not find a satisfying run, we consider the given abstract path $\gls*{polyseq}_K$ to be infeasible. We then use $\gls*{polyseq}_K$ as a counterexample, and generate a new discrete path. Specifically, we generate a new Kripke structure by taking the product of $M_{\gls*{stlformula}}$ and the complement of $\gls*{polyseq}_K$. In this way, new discrete paths found by \gls*{bmc} over the new Kripke structure will avoid the infeasible counterexample. 

\begin{algorithm}
	\caption{\textsc{idPRTL}}
	\label{algo:idprtl}
    \begin{algorithmic}[1]
		\STATE
		$\langle M_{\gls*{stlformula}}, \tilde{\gls*{stlformula}} \rangle \leftarrow \text{abstract}(\gls*{stlformula}, sys);$\label{alg:idstl_init}
        \WHILE{$\text{\gls*{bmc}}(M_{\gls*{stlformula}}, \tilde{\gls*{stlformula}}) \neq \emptyset$}
            \STATE $\gls*{polyseq}_k \leftarrow \text{\gls*{bmc}}(M_{\gls*{stlformula}}, \tilde{\gls*{stlformula}}) $  
            \STATE $\langle \gls*{runpr}, \gls*{inputc} \rangle \leftarrow \text{fSearch}(\gls*{polyseq}_K);$
            \IF{feasible} 
                \RETURN $\langle \gls*{runpr}, \gls*{inputc} \rangle$;
            \ENDIF
            \STATE $M_{\gls*{stlformula}} \leftarrow M_{\gls*{stlformula}} \times \text{toKripke}(\gls*{polyseq}_K)^c;$  \label{algo:idprstl_cex}
		\ENDWHILE
		\RETURN infeasible
  \end{algorithmic}
\end{algorithm}

This process, outlined in Algorithm \ref{algo:idprtl}, continues until either a satisfying trajectory of System (\ref{eq:prsystem}) is found or \gls*{bmc} indicates that there is no satisfying discrete path. We call this Iterative Deepening PRTL because it is inspired by iterative deepening graph search \cite{kira:Russell:2009}. Each iteration of \gls*{bmc} is like a depth-limited depth-first search over the space of possible discrete plans. Repeating this procedure with the new product Kripke structure is analogous to increasing the depth of the search. 

\hl{
\section{Soundness and Completeness}\label{sec:correctness}

First, we prove the soundness of our approach as follows:
\begin{theorem}\label{theo:soundness}
Given a stochastic system (\ref{eq:prsystem}), a \gls*{prtl} formula $\gls*{stlformula}$, and a prior belief state $\gls*{beliefini} \sim \gls*{normal}(\gls*{meanvecini},\gls*{covmatini})$, Algorithm \ref{algo:idprtl} finds a path $\gls*{runpr}_H$ only if $\gls*{runpr}_H \gls*{sat} \gls*{stlformula}$. 
\end{theorem}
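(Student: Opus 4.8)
The plan is to isolate the single place where Algorithm~\ref{algo:idprtl} can emit a trajectory and then to show that Theorems~\ref{theo:abstraction} and~\ref{theo:fsearch} force that trajectory to satisfy $\gls*{stlformula}$. Algorithm~\ref{algo:idprtl} returns a path only inside the \textsc{if feasible} branch, i.e.\ only when $\text{fSearch}(\gls*{polyseq}_K)$ succeeds on a discrete path $\gls*{polyseq}_K = \gls*{poly}_0\dots\gls*{poly}_{L-1}(\gls*{poly}_L\dots\gls*{poly}_K)^\omega$ produced by \gls*{bmc}. First I would record two facts about this $\gls*{polyseq}_K$: \gls*{bmc} is sound, so $\gls*{polyseq}_K$ is a genuine path of the current Kripke structure and satisfies $\tilde{\gls*{stlformula}}$ in $(K,L)$-loop form; and the current Kripke structure is obtained from $M_{\gls*{stlformula}}$ by repeatedly taking products with complements of earlier counterexamples (line~\ref{algo:idprstl_cex}), an operation that only removes paths. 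Hence $\gls*{polyseq}_K$ is also a path of $M_{\gls*{stlformula}}$ satisfying $\tilde{\gls*{stlformula}}$, so it suffices to reason about a single iteration.

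Next I would show that the concrete witness refines $\gls*{polyseq}_K$. By the soundness half of Theorem~\ref{theo:fsearch} (and because \textsc{Propagate} emits only valid segments), $\text{fSearch}$ returns $\gls*{runpr}_H$ only when $\gls*{runpr}_H$ solves the feasibility problem~(\ref{eq:feasproblem}). Reading the constraints of~(\ref{eq:feasproblem}) against the convex translation of a probabilistic predicate derived just before Definition~\ref{def:abs}, every belief $\gls*{belief}_k$ along $\gls*{runpr}_H$ satisfies exactly the probabilistic predicates defining $\gls*{poly}_{\mathbb{I}(k)}$, hence $\gls*{belief}_k \in \gls*{poly}_{\mathbb{I}(k)}$ and the state subformulas satisfied by $\gls*{belief}_k$ are those in $\mathcal{L}(\gls*{poly}_{\mathbb{I}(k)})$. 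For the periodic part of the lasso I would invoke the loop-closing constraints $\gls*{meanvec}_H = \gls*{meanvec}_{\mathbb{I}(L)}$ and $\gls*{covmat}_H \leq \gls*{covmat}_{\mathbb{I}(L)}$: since the chance-constraint term $-\Phi^{-1}(\gls*{tolpr})\,\| \sqrt{\gls*{covmat}}\, \gls*{cvec}\|_2$ is monotone in $\gls*{covmat}$, a not-larger covariance at the loop head keeps every predicate constraint satisfied on every unrolling. Thus the abstraction $\gls*{abs}(\gls*{belief}_0)\gls*{abs}(\gls*{belief}_1)\dots$ of the infinite belief path underlying $\gls*{runpr}_H$ agrees, label by label, with $\gls*{polyseq}_K$.

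Finally I would transfer satisfaction, from $\gls*{polyseq}_K \gls*{sat} \tilde{\gls*{stlformula}}$ to $\gls*{runpr}_H \gls*{sat} \gls*{stlformula}$. Because $\tilde{\gls*{stlformula}}$ is the unique formula over $AP = sub(\gls*{stlformula})$ corresponding to $\gls*{stlformula}$, and because each $\gls*{belief}_k$ satisfies precisely the state subformulas in $\mathcal{L}(\gls*{poly}_{\mathbb{I}(k)})$, a structural induction on $\gls*{stlformula}$ — atomic case by the predicate translation, Boolean and temporal cases matching the \gls*{prtl} semantics clause-for-clause with the LTL clauses \gls*{bmc} applies to $\tilde{\gls*{stlformula}}$ — gives that the concrete belief path satisfies $\gls*{stlformula}$ at index $0$, i.e.\ $\gls*{runpr}_H \gls*{sat} \gls*{stlformula}$. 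Since this holds whenever a path is returned, soundness follows.

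The hard part will be the second step: certifying an \emph{infinite-horizon} \gls*{prtl} property from the \emph{finite} lasso witness that sampling-based feasibility search actually produces. This hinges on the covariance-contraction condition $\gls*{covmat}_H \leq \gls*{covmat}_{\mathbb{I}(L)}$ together with monotonicity of the chance constraints, and on checking that the $(K,L)$-loop form is compatible with the until/release semantics so that periodic repetition does not break satisfaction. The accompanying induction in the third step is essentially the converse of Theorem~\ref{theo:abstraction}, and it must be arranged so that what propagates through the recursion is the labeling $\mathcal{L}(\gls*{poly}_{\mathbb{I}(k)})$, not mere region membership.
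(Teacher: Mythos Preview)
Your proposal is correct and follows essentially the same route as the paper's proof, which also chains Theorem~\ref{theo:fsearch} (the returned run lies in a discrete path $\gls*{polyseq}_K$ satisfying $\tilde{\gls*{stlformula}}$) with Theorem~\ref{theo:abstraction} (the abstraction correspondence) to conclude $\gls*{runpr}_H \gls*{sat} \gls*{stlformula}$. Your version is considerably more careful than the paper's three-line argument: you make explicit that what is actually needed is the \emph{converse} direction of Theorem~\ref{theo:abstraction} (hence the structural induction), you handle the product-with-complement refinement of $M_{\gls*{stlformula}}$, and you supply the covariance-monotonicity argument that justifies unrolling the finite $(K,L)$-lasso to an infinite satisfying path---all points the paper leaves implicit.
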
 
\begin{proof}
Assume that Algorithm \ref{algo:idprtl} finds a path $\gls*{runpr}_H$. From Theorem \ref{theo:fsearch}, this run is in a discrete path that satisfies the specification. From Theorem \ref{theo:abstraction}, the trace of this run is a trace of $M_{\gls*{stlformula}}$ which satisfies the formula \gls*{stlformula}. Thus $\gls*{runpr}_H$ is a path of System (\ref{eq:prsystem}) that satisfies $\gls*{stlformula}$. 
\end{proof}

We prove probabilistic completeness as follows, by showing that the nonexistence of a solution of Algorithm \ref{algo:idprtl} is evidence that no solution exists in the following sense:

\begin{theorem}\label{theo:completeness}
Given a sufficient large sample upper bound $N$, a stochastic system (\ref{eq:prsystem}), a \gls*{prtl} formula $\gls*{stlformula}$, and a prior belief state $\gls*{beliefini} \sim \gls*{normal}(\gls*{meanvecini},\gls*{covmatini})$, if Algorithm \ref{algo:idprtl} does not find a path $\gls*{runpr}_H$, then the probability that any feasible $\gls*{runpr}_H \gls*{sat} \gls*{stlformula}$ exists is less than $0.5$. 
\end{theorem}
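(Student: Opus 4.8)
The plan is to prove the likelihood form of the statement announced just before it: I will show that \emph{whenever} a feasible $\gls*{runpr}_H \gls*{sat} \gls*{stlformula}$ exists, Algorithm~\ref{algo:idprtl} returns such a path with probability strictly greater than $1/2$ once $N$ is large enough, so that observing a failure is evidence against existence. If no feasible $\gls*{runpr}_H \gls*{sat} \gls*{stlformula}$ exists the claim is vacuous, so assume one does; and recall that ``feasible'' here means a solution of~(\ref{eq:feasproblem}), i.e.\ feasibility under the MLO belief dynamics~(\ref{eq:beliefsyssimple}), consistent with the simplification of Section~\ref{sec:feas}.

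First I would lift $\gls*{runpr}_H$ to the discrete layer. By Theorem~\ref{theo:abstraction}, its abstract trace $\gls*{abs}(\gls*{belief}_0)\gls*{abs}(\gls*{belief}_1)\dots$ is a path of $M_{\gls*{stlformula}}$ satisfying $\tilde{\gls*{stlformula}}$; since $M_{\gls*{stlformula}}$ is finite this path can be taken in $(K,L)$-loop form, call it $\gls*{polyseq}_K^{*}$, and $\gls*{runpr}_H$ lies inside the tube it describes. Then I would track the outer loop of Algorithm~\ref{algo:idprtl}: line~\ref{algo:idprstl_cex} replaces $M_{\gls*{stlformula}}$ by its product with the complement of only the discrete path on which \textsc{fSearch} has just failed, and \gls*{bmc} returns \emph{some} satisfying discrete path whenever one remains. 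Hence, ranging over the satisfying bounded-loop discrete paths of $\tilde{\gls*{stlformula}}$, the algorithm either terminates early by succeeding on a proposed path, or it must eventually propose and run \textsc{fSearch} on $\gls*{polyseq}_K^{*}$ itself before \gls*{bmc} can report ``infeasible,'' because $\gls*{polyseq}_K^{*}$ stays a valid \gls*{bmc} witness until it is explicitly excluded, and it is excluded only after \textsc{fSearch} fails on it.

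Finally I would invoke Theorem~\ref{theo:fsearch}. Feasibility of $\gls*{runpr}_H$ for $\gls*{polyseq}_K^{*}$ means the feasible set of~(\ref{eq:feasproblem}) restricted to this tube is nonempty and admits an optimal path, so Theorem~\ref{theo:fsearch} yields that \textsc{fSearch}$(\gls*{polyseq}_K^{*},N)$ returns a solution of~(\ref{eq:feasproblem}) with probability tending to $1$ as $N\to\infty$; ``sufficiently large $N$'' is any $N$ making this probability exceed $1/2$. Since each \textsc{fSearch} call draws fresh randomness and the algorithm fails only if \textsc{fSearch} failed on $\gls*{polyseq}_K^{*}$ (among possibly other solution-carrying discrete paths, which can only tighten the bound), the probability that Algorithm~\ref{algo:idprtl} fails given that a feasible $\gls*{runpr}_H\gls*{sat}\gls*{stlformula}$ exists is below $1/2$; combining this with soundness (Theorem~\ref{theo:soundness}, under which the algorithm always fails when no such path exists) and a neutral prior, Bayes' rule gives that, conditioned on the observed failure, the existence of a feasible $\gls*{runpr}_H\gls*{sat}\gls*{stlformula}$ has probability below $1/2$, as claimed. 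The step I expect to be the main obstacle is the finiteness/termination argument of the middle paragraph: the product-with-complement construction enlarges the Kripke structure at each round, so one must argue that the space of relevant witnesses (and their loop lengths) stays bounded — I would pin this down using the bounded-loop completeness of LTL model checking together with the fact that each iteration strictly removes one previously failed witness.
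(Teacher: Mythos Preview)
Your argument diverges from the paper's at the point where the $0.5$ enters, and the divergence traces back to how you read ``feasible'' and which probability space you work in.

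The paper does \emph{not} obtain the $0.5$ from the internal randomness of \textsc{fSearch} filtered through a Bayesian prior. It instead invokes Lemma~\ref{lem:mlo}: if no satisfying path exists inside a discrete tube under the MLO dynamics~(\ref{eq:beliefsyssimple}), then the probability---over the random future observations $\gls*{outputc}_k$ driving the \emph{true} belief dynamics~(\ref{eq:beliefsys})---that a satisfying path exists in that tube is below $0.5$. The proof then runs: if Algorithm~\ref{algo:idprtl} fails, either \gls*{bmc} found no abstract trace (and Theorem~\ref{theo:abstraction} rules out any concrete one), or \textsc{fSearch} failed on every proposed tube; for sufficiently large $N$, Theorem~\ref{theo:fsearch} makes that failure witness the nonexistence of an MLO-feasible path, and Lemma~\ref{lem:mlo} converts this into the $0.5$ bound on feasibility under the true dynamics. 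The Remark immediately after the proof confirms this reading: the $0.5$ ``stems from the use of MLO belief dynamics,'' and would rise to $1$ if one could plan with~(\ref{eq:beliefsys}) directly.

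Your version fixes ``feasible'' to mean MLO-feasible from the outset, so Lemma~\ref{lem:mlo} never appears, and the $0.5$ is instead the failure probability of a single \textsc{fSearch} call passed through Bayes' rule with a neutral prior on existence. That is a coherent argument, but for a different statement: it imports an assumption (the prior) the theorem does not provide, and it treats the existence of a satisfying path as an epistemic random variable rather than as an event in the observation space of System~(\ref{eq:prsystem}). It also assigns ``sufficiently large $N$'' a different job---tuning one RRT call above a $1/2$ success rate---whereas in the paper $N\to\infty$ is used to render \textsc{fSearch} effectively complete on MLO dynamics, after which the $0.5$ is a purely geometric fact about a Gaussian mean lying outside a convex backward-reachable set. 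Your termination worry in the middle paragraph is legitimate for both arguments, but note that the paper's proof sidesteps it by treating each failed \textsc{fSearch} (at large $N$) as a certificate of MLO-infeasibility for that tube, rather than needing to guarantee that one particular witness tube is eventually proposed.
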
 
\begin{remark}[Sample Upper Bound]
    Since feasibility search is a stochastic algorithm, we cannot guarantee that there is a particular minimum value of $N$; however, as $N \rightarrow \infty$, Theorem \ref{theo:completeness} holds. 
\end{remark}

We need the following lemma to prove the theorem:}
\begin{lemma}\label{lem:mlo}
Given a discrete path $\gls*{polyseq}_K$ in $K$-loop form, if there does not exist a path $\gls*{runpr} \gls*{sat} \gls*{stlformula}$ which satisfies the \gls*{mlo} dynamics (\ref{eq:beliefsyssimple}), the probability that there exists any such path under the true belief dynamics (\ref{eq:beliefsys}) is less than $0.5$. 
\end{lemma}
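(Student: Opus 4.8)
The plan is to exploit the fact that the covariance recursion is \emph{identical} in the MLO dynamics (\ref{eq:beliefsyssimple}) and the true dynamics (\ref{eq:beliefsys}), so the only source of randomness when comparing the two is the mean update, and moreover that the MLO mean is exactly the mean of the (Gaussian) true mean. Fix a control sequence $\gls*{inputc}_{0}\dots\gls*{inputc}_{H-1}\in\gls*{inputcdom}^H$, where $H$ is the horizon induced by the $K$-loop path $\gls*{polyseq}_K$. Running (\ref{eq:beliefsys}) with random future observations makes $\gls*{meanvec}_k$ a random variable; a short induction (the innovation $\gls*{outputc}_k-C\gls*{f}_k$ is zero-mean Gaussian and independent of $\gls*{meanvec}_k$, so $\gls*{meanvec}_{k+1}$ has conditional mean $A\gls*{meanvec}_k+B\gls*{inputc}_k$) shows $\gls*{meanvec}_k^{\mathrm{true}}\sim\gls*{normal}(\gls*{meanvec}_k^{\mathrm{MLO}},\Lambda_k)$ with $\Lambda_0=\boldsymbol{0}$ and $\Lambda_{k+1}=A\Lambda_k A^\intercal+\boldsymbol{K}_k C\gls*{inovar}_k$. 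Since the $B\gls*{inputc}_k$ term cancels inside the innovation, the drift $\gls*{meanvec}_k^{\mathrm{true}}-\gls*{meanvec}_k^{\mathrm{MLO}}$ is a control-\emph{independent} random vector, and the covariances $\gls*{covmat}_k$ (hence the constants $\bar{a}_k:=\gls*{cscalar}_{\mathbb{I}(k)}+\Phi^{-1}(\gls*{tolpr}_{\mathbb{I}(k)})\|\sqrt{\gls*{covmat}_k}\,\gls*{cvec}_{\mathbb{I}(k)}\|_2$) are the same deterministic objects in both models; in particular the covariance/loop part of (\ref{eq:feasproblem}) is true-feasible iff it is MLO-feasible, so if it is infeasible we are done immediately.

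Next I would collapse the existential-over-controls statement into a single scalar event. Assuming the covariance part is satisfiable, MLO-feasibility of (\ref{eq:feasproblem}) for the fixed $\gls*{polyseq}_K$ is the linear program ``$\exists\,\gls*{inputc}_{0:H-1}\in\gls*{inputcdom}^H$ with $\gls*{cvec}_{\mathbb{I}(k)}^\intercal\gls*{meanvec}_k^{\mathrm{MLO}}<\bar{a}_k$ for all $k$''. Its infeasibility, via a Sion/von Neumann minimax argument over the compact polytope $\gls*{inputcdom}^H$ (note $\gls*{meanvec}_k^{\mathrm{MLO}}$ is affine in the controls), produces multipliers $\lambda_k\ge 0$, $\sum_k\lambda_k=1$, with
\[ g(\gls*{inputc}_{0:H-1}) := \sum_k \lambda_k\,\gls*{cvec}_{\mathbb{I}(k)}^\intercal\gls*{meanvec}_k^{\mathrm{MLO}} \ \ge\ \sum_k\lambda_k\bar{a}_k =: t \qquad \text{for all }\gls*{inputc}_{0:H-1}\in\gls*{inputcdom}^H. \]
Let $D:=\sum_k\lambda_k\,\gls*{cvec}_{\mathbb{I}(k)}^\intercal(\gls*{meanvec}_k^{\mathrm{true}}-\gls*{meanvec}_k^{\mathrm{MLO}})$, a fixed zero-mean Gaussian scalar by the previous paragraph (the same for every control). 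If some control is true-feasible, then $\gls*{cvec}_{\mathbb{I}(k)}^\intercal\gls*{meanvec}_k^{\mathrm{true}}<\bar{a}_k$ for all $k$, hence $g(\gls*{inputc}_{0:H-1})+D=\sum_k\lambda_k\,\gls*{cvec}_{\mathbb{I}(k)}^\intercal\gls*{meanvec}_k^{\mathrm{true}}<t$, and with $g\ge t$ this forces $D<0$. Therefore $P(\text{some control is true-feasible})\le P(D<0)\le\tfrac12$ (strict under a mild nondegeneracy/positive-margin condition), and ``some control is true-feasible'' is exactly the existence of a path $\gls*{runpr}\gls*{sat}\gls*{stlformula}$ of System (\ref{eq:prsystem}) following $\gls*{polyseq}_K$ under (\ref{eq:beliefsys}); this proves the lemma.

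The main obstacle is the existential quantifier over control sequences: a naive union bound over infinitely many controls fails, and the argument only works because (i) the covariance recursion — hence the thresholds $\bar{a}_k$ — coincide in the two models and (ii) the stochastic drift $\gls*{meanvec}_k^{\mathrm{true}}-\gls*{meanvec}_k^{\mathrm{MLO}}$ is control-independent, which together turn ``$\exists$ true-feasible control'' into the single Gaussian event $\{D<0\}$. The one place needing extra care is state-dependent noise $\gls*{noise-std}_k$: then $\gls*{covmat}_k$, $\boldsymbol{K}_k$, $\bar{a}_k$, and the drift covariance all depend on the mean trajectory and hence on the controls, so (i)--(ii) must be re-established under a boundedness/Lipschitz assumption on $\gls*{noise-std}_k(\cdot)$ (or the lemma stated for control-independent $\gls*{noise-std}$); likewise the terminal loop equality $\gls*{meanvec}_H=\gls*{meanvec}_{\mathbb{I}(L)}$ should be read as the appropriate ``return-to-region'' condition on the (random) true path rather than an exact equality.
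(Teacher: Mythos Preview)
Your argument is correct (modulo the caveats you already flag about state-dependent $\gls*{noise-std}_k$ and the loop-closure equality) and proceeds along a genuinely different route from the paper. The paper reasons geometrically via backward reachability: since the MLO mean dynamics are linear and the targets $E[\gls*{poly}_k]$ are convex polytopes, the set of initial points from which a feasible control sequence exists is convex; MLO-infeasibility places $\gls*{meanvecini}$ outside that convex set, and the paper then invokes (implicitly, via a separating hyperplane) the fact that a Gaussian assigns probability at most $\tfrac12$ to any convex region excluding its mean. You instead work algebraically: you first isolate that the covariance recursion---hence every threshold $\bar a_k$---is identical in the two models, show the random drift $\gls*{meanvec}_k^{\mathrm{true}}-\gls*{meanvec}_k^{\mathrm{MLO}}$ is a control-independent zero-mean Gaussian (since the innovation sequence is), and then use Sion's minimax over the compact polytope $\gls*{inputcdom}^H$ to produce an explicit separating functional, collapsing the event ``some control is true-feasible'' to the single scalar Gaussian event $\{D<0\}$.

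Both routes ultimately rest on the same convexity/separating-hyperplane mechanism, but yours makes the probabilistic structure explicit and pinpoints exactly where the argument needs strengthening (state-dependent $\gls*{noise-std}_k$ breaks both the shared-covariance claim and the control-independence of the drift; the loop equality must be relaxed to a return-to-region condition). The paper's backward-reachability sketch is terser but is somewhat imprecise about which random quantity the $<\tfrac12$ bound is taken over---it speaks of the initial \emph{system state} $\gls*{statec}_0$ rather than the random belief-mean trajectory---so your version buys rigor and a clear diagnosis of the assumptions at the cost of a little more machinery.
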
   
\begin{proof}
Assume that there is a satisfying path in $\gls*{polyseq}_K$ under the true belief dynamics (\ref{eq:beliefsys}), but no such path under the \gls*{mlo} assumption. For this to be so, the initial system state $\gls*{statec}_0$ must be backwards reachable from $E[\gls*{poly}_k]$ for any $k > 0$, but the initial belief mean $\gls*{meanvecini}$ must not be. For discrete-time linear control systems, the backwards reachability of convex polytopic spaces is a sequence of convex polytopes since  linear transformations of convex polytopes are convex polytopes \cite{ziegler2012lectures}. Hence, the mean $\gls*{meanvecini}$ of the initial belief state must be outside of all those convex polytopes, meaning that the probability of the state reaching $E[\gls*{poly}_k]$ is equal to the probability of being in such a convex polytopes. Since these polytopes are convex and do not include the mean, they must have probability less than $0.5$.    
\end{proof}  
\hl{
Now, we can continue the proof.
\begin{proof}
From Theorem \ref{theo:abstraction}, if there is no trace in the abstraction $M_{\gls*{stlformula}}$ that satisfies the formula \gls*{stlformula}, there is no trace in System (\ref{eq:prsystem}) either. Moreover, from Theorem \ref{theo:fsearch}, if there exists a trace in the abstraction but the feasibility search (Algorithm \ref{algo:feas}) does not find a path, assuming that we have a sufficient large sample upper bound $N$, there is no path that satisfies the specification with probability higher than $0.5$ (Lemma \ref{lem:mlo}).
\end{proof}

\begin{remark}[Completeness]
From Lemma \ref{lem:mlo} above, we see that the 0.5 probability restriction on our completeness guarantee stems from the use of MLO belief dynamics (\ref{eq:beliefsyssimple}). If we were able to plan with the full belief dynamics (\ref{eq:beliefsys}) this probability could be raised to 1. Unfortunately, using the full belief dynamics is not feasible, as future observations are not known a priori. 

One approach to improve this 0.5 probability guarantee would be to define simplified belief dynamics based on sampling future observations. Such techniques have been proposed for belief-space planning in \cite{hadfield2015modular,van2012efficient,indelman2014planning,indelman2016towards}, and remain an area of active research.  
\end{remark}

\begin{remark}[Complexity]
The complexity of \textsc{idPRTL} depends on the \gls*{prtl} formula complexity and the parameter $N$. First, the worst case number of symbols in the abstracted system (Def. \ref{def:abs}) is exponential in number of predicates, $O(2^{|\gls*{predset}|})$. State-of-the-art \gls*{bmc}  solvers are linear in the number of symbols and the length of bound $K$, $O(K)$. Finally, the complexity of each feasibility search query is $O(N \log N)$ \cite{li2015sparse}.
\end{remark}
}


\section{Example}\label{sec:mot_ex}

In this section, we apply our planning framework to automated infrastructure inspection with a quadrotor. The inspection task involves collecting images of key points. While the locations of these key points are often known a priori, imperfect localization requires that the quadrotor must maintain a belief over its location. Furthermore, when the quadrotor is directly over the power line it can use a camera for precise localization, while it must otherwise rely on less accurate GPS.  

A simple inspection task might be articulated as follows:
``Staying away from the power line, take pictures of the top of two poles and return to the charging station.'' This task, illustrated in Figure \ref{fig:quad_demo}, can be written in \gls*{prtl} as follows::
\begin{align}\label{eq:example_spec}
\gls*{stlformula} = & \gls*{always} \gls*{stlformula}_{safe} \gls*{and} \gls*{eventually} \Big((\gls*{stlformula}_{pole_1} \gls*{or} \gls*{stlformula}_{Pline}) \gls*{until} \gls*{stlformula}_{pole_2}\Big) \gls*{and} \gls*{eventually}\gls*{always} \gls*{stlformula}_h,  
\end{align}
where $\gls*{stlformula}_{safe}$ denotes avoiding collisions with the powerline (grey box in Figure \ref{fig:quad_demo}), $\gls*{stlformula}_{Pline}$ indicates flying over the powerline, $\gls*{stlformula}_{pole_*}$ indicate visiting the key points, and $\gls*{stlformula}_h$ indicates returning ``home'' to the charging station. \hl{All predicates are defined with $\epsilon = 0.05$.}

For the quadrotor dynamics, we follow \cite{zhou2014vector} in modeling the quadrotor with a chain of integretors using differential flatness. We model the observations $\gls*{outputc}$ as follows, where $\gls*{statec} = [x, y, z]^\intercal$ is the position of the quadrotor:
\begin{align*}
    & \gls*{outputc} = \gls*{statec} + w(\gls*{statec}) \\
    & w(\gls*{statec}) \sim \mathcal{N}(0, \min(\gls*{std}_{gps}^2, \gls*{std}_{camera}^2(\gls*{statec})) \\
    & \gls*{std}_{camera}^2(\gls*{statec}) = (y-y_{center} )^4 + (z-z_{top})^4 + \gls*{std}_{min}^2
\end{align*}

Essentially, the quadrotor has low localization uncertainty while it is directly over the power line ($\gls*{std}_{camera}^2$), but high localization uncertainty otherwise ($\gls*{std}_{gps}^2$). \hl{After specifying an initial belief, \textsc{idPRTL} returns the trajectory shown in blue, where the shaded region indicates the covariance. We know from Theorem \ref{theo:soundness} that this trajectory satisfies the specification.}

Figure \ref{fig:quad_demo} illustrates three important features of our approach. First, the resulting trajectory satisfies the specification, as both points of interest (green) are visited before returning to the charging station (red). Second, the specification (\ref{eq:example_spec}) is non-convex, as it contains a disjuction and nested temporal operators. This means that existing sampling-based approaches such as \cite{dey2016fast} cannot be used in this scenario. Finally, the resulting trajectory illustrates the use of active perception. The quadrotor first flies a conservative distance above the powerline, approaching it from above so that the camera can be used to reduce uncertainty before moving close to the key points.

\begin{figure}
	\centering
	\includegraphics[width=0.7\linewidth]{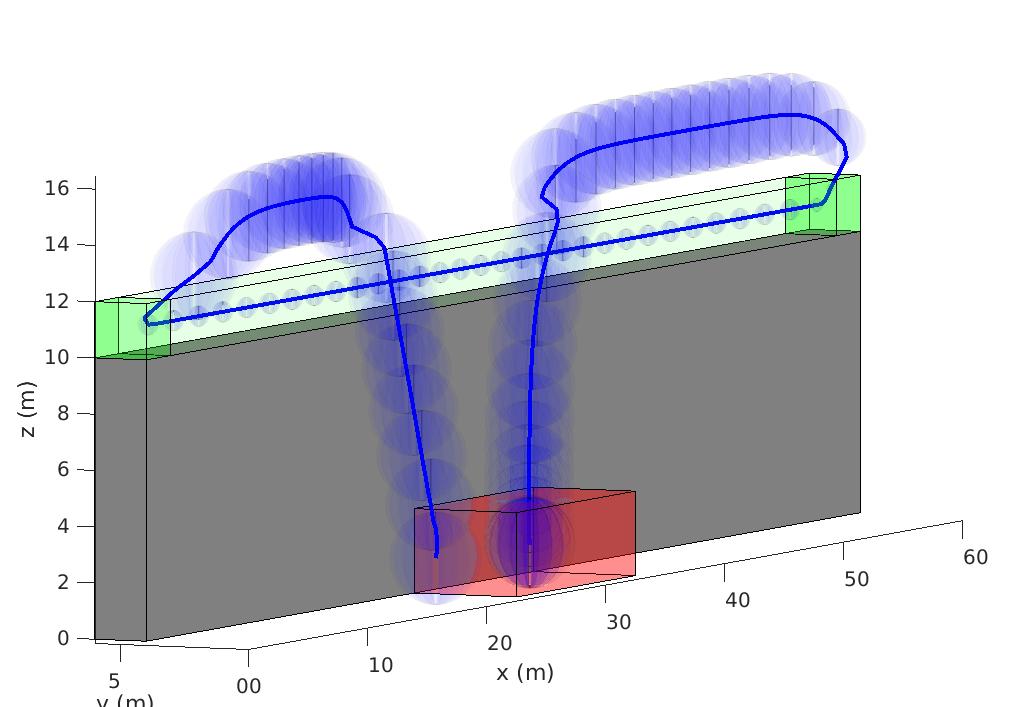}
    \caption{A quadrotor performing an inspection task must visit two key points (green) and avoid an obstacle (gray) before returning to a charging station (red). The belief trajectory (blue) indicates the use of active perception: the quadrotor flies high above the powerline to improve confidence in its belief before approaching the key points. }
    \label{fig:quad_demo}  
\end{figure}

\section{Conclusion and Future Work}\label{sec:conclusion}

We presented a framework for controller synthesis from \gls*{prtl} specifications in the belief space that combines the advantages of Bounded Model Checking and sampling-based motion planning. Our framework allows for active perception in complex tasks involving nonconvex \gls*{prtl} specifications. We demonstrated the efficacy of our approach on a simulation of a quadrotor power line inspection task. 


%







\bibliographystyle{IEEEtran}
\bibliography{library}

\begin{thebibliography}{10}
\providecommand{\url}[1]{#1}
\csname url@samestyle\endcsname
\providecommand{\newblock}{\relax}
\providecommand{\bibinfo}[2]{#2}
\providecommand{\BIBentrySTDinterwordspacing}{\spaceskip=0pt\relax}
\providecommand{\BIBentryALTinterwordstretchfactor}{4}
\providecommand{\BIBentryALTinterwordspacing}{\spaceskip=\fontdimen2\font plus
\BIBentryALTinterwordstretchfactor\fontdimen3\font minus
  \fontdimen4\font\relax}
\providecommand{\BIBforeignlanguage}[2]{{%
\expandafter\ifx\csname l@#1\endcsname\relax
\typeout{** WARNING: IEEEtran.bst: No hyphenation pattern has been}%
\typeout{** loaded for the language `#1'. Using the pattern for}%
\typeout{** the default language instead.}%
\else
\language=\csname l@#1\endcsname
\fi
#2}}
\providecommand{\BIBdecl}{\relax}
\BIBdecl

\bibitem{bajcsy1988active}
R.~Bajcsy, ``{Active perception},'' \emph{Proceedings of the IEEE}, vol.~76,
  no.~8, pp. 966--1005, 1988.

\bibitem{krishnamurthy2016partially}
V.~Krishnamurthy, \emph{{Partially Observed Markov Decision Processes}}.\hskip
  1em plus 0.5em minus 0.4em\relax Cambridge University Press, 2016.

\bibitem{bai2014integrated}
H.~Bai, D.~Hsu, and W.~S. Lee, ``{Integrated perception and planning in the
  continuous space: A POMDP approach},'' \emph{The International Journal of
  Robotics Research}, vol.~33, no.~9, pp. 1288----1302, 2014.

\bibitem{valencia2013planning}
R.~Valencia, M.~Morta, J.~Andrade-Cetto, and J.~M. Porta, ``{Planning reliable
  paths with pose SLAM},'' \emph{IEEE Transactions on Robotics}, vol.~29,
  no.~4, pp. 1050--1059, 2013.

\bibitem{agha2014firm}
S.~Chakravorty and N.~M. Amato, ``{FIRM : Sampling-based Feedback Motion
  Planning Under Motion Uncertainty and Imperfect Measurements},'' \emph{The
  International Journal of Robotics Research}, vol.~1, no.~2, pp. 1--40, 2003.

\bibitem{van2012efficient}
J.~{Van den Berg}, S.~Patil, and R.~Alterovitz, ``{Efficient Approximate Value
  Iteration for Continuous Gaussian POMDPs},'' in \emph{Proceedings of the
  Twenty-Sixth AAAI Conference on Artificial Intelligence}, 2012, pp.
  1832--1838.

\bibitem{indelman2016towards}
V.~Indelman, L.~Carlone, and F.~Dellaert, ``{Towards planning in generalized
  belief space},'' in \emph{Robotics Research}.\hskip 1em plus 0.5em minus
  0.4em\relax Springer, 2016, pp. 593--609.

\bibitem{platt2017efficient}
\BIBentryALTinterwordspacing
R.~Platt, L.~Kaelbling, T.~Lozano-Perez, and R.~Tedrake, ``{Efficient Planning
  in Non-Gaussian Belief Spaces and Its Application to Robot Grasping},'' in
  \emph{Robotics Research : The 15th International Symposium ISRR}, H.~I.
  Christensen and O.~Khatib, Eds.\hskip 1em plus 0.5em minus 0.4em\relax Cham:
  Springer International Publishing, 2017, pp. 253--269. [Online]. Available:
  \url{https://doi.org/10.1007/978-3-319-29363-9_15}
\BIBentrySTDinterwordspacing

\bibitem{ciesinski2004probabilistic}
F.~Ciesinski and M.~Gr{\"{o}}{\ss}er, ``{On probabilistic computation tree
  logic},'' in \emph{Validation of Stochastic Systems}.\hskip 1em plus 0.5em
  minus 0.4em\relax Springer, 2004, pp. 147--188.

\bibitem{kwiatkowska2011prism}
M.~Kwiatkowska, G.~Norman, and D.~Parker, ``{PRISM 4.0: Verification of
  probabilistic real-time systems},'' in \emph{International conference on
  computer aided verification}.\hskip 1em plus 0.5em minus 0.4em\relax
  Springer, 2011, pp. 585--591.

\bibitem{sadigh2015safe}
D.~Sadigh and A.~Kapoor, ``{Safe Control under Uncertainty with Probabilistic
  Signal Temporal Logic},'' in \emph{Proceedings of Robotics: Science and
  Systems}, AnnArbor, Michigan, jun 2016.

\bibitem{zhong2017fast}
K.~Zhong, P.~Jain, and A.~Kapoor, ``{Fast second-order cone programming for
  safe mission planning},'' in \emph{2017 IEEE International Conference on
  Robotics and Automation (ICRA)}, may 2017, pp. 79--86.

\bibitem{jha2018safe}
S.~Jha, V.~Raman, D.~Sadigh, and S.~A. Seshia, ``{Safe autonomy under
  perception uncertainty using chance-constrained temporal logic},''
  \emph{Journal of Automated Reasoning}, vol.~60, no.~1, pp. 43--62, 2018.

\bibitem{dey2016fast}
D.~Dey, D.~Sadigh, and A.~Kapoor, ``{Fast Safe Mission Plans for Autonomous
  Vehicles},'' 2016.

\bibitem{farahani2018shrinking}
S.~S. Farahani, R.~Majumdar, V.~S. Prabhu, and S.~Soudjani, ``{Shrinking
  Horizon Model Predictive Control with Signal Temporal Logic Constraints under
  Stochastic Disturbances},'' \emph{IEEE Transactions on Automatic Control},
  2018.

\bibitem{chui2017kalman}
C.~K. Chui and G.~Chen, \emph{{Kalman Filtering with Real-Time Applications}},
  5th~ed., ser. Springer Series in Information Sciences, volume 17, 2017.

\bibitem{platt2010belief}
R.~{Platt Jr}, R.~Tedrake, L.~Kaelbling, and T.~Lozano-Perez, ``{Belief space
  planning assuming maximum likelihood observations},'' 2010.

\bibitem{indelman2014planning}
V.~Indelman, L.~Carlone, and F.~Dellaert, ``{Planning under uncertainty in the
  continuous domain: a generalized belief space approach},'' in \emph{Robotics
  and Automation (ICRA), 2014 IEEE International Conference on}.\hskip 1em plus
  0.5em minus 0.4em\relax IEEE, 2014, pp. 6763--6770.

\bibitem{lin2014mission}
H.~Lin, ``{Mission accomplished: An introduction to formal methods in mobile
  robot motion planning and control},'' \emph{Unmanned Systems}, vol.~2,
  no.~02, pp. 201--216, 2014.

\bibitem{reynolds2015counterexample}
A.~Reynolds, M.~Deters, V.~Kuncak, C.~Tinelli, and C.~Barrett,
  ``{Counterexample-guided quantifier instantiation for synthesis in SMT},'' in
  \emph{International Conference on Computer Aided Verification}.\hskip 1em
  plus 0.5em minus 0.4em\relax Springer, 2015, pp. 198--216.

\bibitem{biere1999symbolic}
A.~Biere, A.~Cimatti, E.~Clarke, and Y.~Zhu, ``{Symbolic model checking without
  BDDs},'' \emph{Tools and Algorithms for the Construction and Analysis of
  Systems}, pp. 193--207, 1999.

\bibitem{CAV02}
A.~Cimatti, E.~Clarke, E.~Giunchiglia, F.~Giunchiglia, M.~Pistore, M.~Roveri,
  R.~Sebastiani, and A.~Tacchella, ``{NuSMV Version 2: An OpenSource Tool for
  Symbolic Model Checking},'' in \emph{Proc. International Conference on
  Computer-Aided Verification (CAV 2002)}, ser. LNCS, vol. 2404.\hskip 1em plus
  0.5em minus 0.4em\relax Copenhagen, Denmark: Springer, jul 2002.

\bibitem{littlefield2013efficient}
Z.~Littlefield, Y.~Li, and K.~E. Bekris, ``{Efficient sampling-based motion
  planning with asymptotic near-optimality guarantees for systems with
  dynamics},'' in \emph{Intelligent Robots and Systems (IROS), 2013 IEEE/RSJ
  International Conference on}.\hskip 1em plus 0.5em minus 0.4em\relax IEEE,
  2013, pp. 1779--1785.

\bibitem{kira:Russell:2009}
\BIBentryALTinterwordspacing
S.~J. Russell and P.~Norvig, \emph{{Artificial Intelligence: A Modern
  Approach}}, 3rd~ed.\hskip 1em plus 0.5em minus 0.4em\relax Pearson, dec 1995,
  vol.~9, no.~2. [Online]. Available:
  \url{http://portal.acm.org/citation.cfm?id=773294}
\BIBentrySTDinterwordspacing

\bibitem{ziegler2012lectures}
G.~M. Ziegler, \emph{{Lectures on polytopes}}.\hskip 1em plus 0.5em minus
  0.4em\relax Springer Science \& Business Media, 2012, vol. 152.

\bibitem{hadfield2015modular}
D.~Hadfield-Menell, E.~Groshev, R.~Chitnis, and P.~Abbeel, ``{Modular task and
  motion planning in belief space},'' in \emph{Intelligent Robots and Systems
  (IROS), 2015 IEEE/RSJ International Conference on}.\hskip 1em plus 0.5em
  minus 0.4em\relax IEEE, 2015, pp. 4991--4998.

\bibitem{li2015sparse}
Y.~Li, Z.~Littlefield, and K.~E. Bekris, ``{Sparse methods for efficient
  asymptotically optimal kinodynamic planning},'' in \emph{Algorithmic
  foundations of robotics XI}.\hskip 1em plus 0.5em minus 0.4em\relax Springer,
  2015, pp. 263--282.

\bibitem{zhou2014vector}
D.~Zhou and M.~Schwager, ``{Vector field following for quadrotors using
  differential flatness},'' in \emph{2014 IEEE International Conference on
  Robotics and Automation (ICRA)}, may 2014, pp. 6567--6572.

\end{thebibliography}
%



\end{document}